\documentclass[letterpaper, 11 pt]{article} 
\usepackage{amsmath,amsfonts,amssymb,color}
\usepackage{mathtools}
\usepackage{xspace}
\usepackage{dsfont}
\usepackage[dvipsnames]{xcolor}
\definecolor{mygreen}{rgb}{0.0, 0.5, 0.0}
\definecolor{winered}{rgb}{0.8,0,0}
\definecolor{myblue}{rgb}{0,0,0.8}
\usepackage{tikz}
\usepackage{wrapfig}
\usepackage{amsthm}
\usepackage{empheq}

\usepackage{amsthm}
\newtheorem{Problem}{Problem}

\newtheorem{definition}{Definition}
\newtheorem{theorem}{Theorem}
\newtheorem{lemma}{Lemma}

\newtheorem{corollary}{Corollary}
\newtheorem{remark}{Remark}

\DeclareMathOperator*{\argmax}{\arg\!\max}
\DeclareMathOperator*{\argmin}{\arg\!\min}
\DeclarePairedDelimiterX{\norm}[1]{\lVert}{\rVert}{#1}

\usepackage{graphicx}
\usepackage{caption}
\usepackage{subcaption}
\usepackage{algorithm}% http://ctan.org/pkg/algorithms
\usepackage{algpseudocode}% 
\usepackage{hyperref}
\usepackage{epsfig}
\usepackage{array}
\usepackage{multirow}
\usepackage{epstopdf}
\usepackage{tikz}
\usepackage{relsize}
\usetikzlibrary{shapes,arrows}
\hypersetup{
    colorlinks=true,
    linkcolor={winered},
    citecolor={mygreen}
}
\usepackage{microtype}
\usepackage{graphicx}
\usepackage{subcaption}
\usepackage{cite}

\newcommand{\scenario}[1]
{{\fontsize{8.5}{8.5}\selectfont\sf #1}\xspace}
\def\BibTeX{{\rm B\kern-.05
em{\sc i\kern-.025em b}\kern-.08em
    T\kern-.1667em\lower.7ex\hbox{E}\kern-.125emX}}
\usepackage[margin=1in]{geometry}

\title{A Unified Framework for Joint Sensor Placement
and Scheduling for Intrusion Detection}
\author{Jayanth Bhargav, Mahsa Ghasemi and Shreyas Sundaram}
\date{}

\begin{document}
\emergencystretch=3em
\maketitle
\newcommand{\sense}{$\scenario{SensorPlacement}$}
\newcommand{\orient}{$\scenario{OrientationScheduling}$}
\newcommand{\abs}[1]{\left| #1 \right|}
\newcommand{\cals}{\mathcal{S}}
\newcommand{\sopt}{\mathcal{S}^{\scenario{OPT}}}

\begin{abstract}
We consider an intrusion detection task in which a defender must jointly optimize sensor placement locations and orientations to minimize the probability of missed detection of an intruder traversing a protected environment. We decompose this problem into a meta problem, termed SensorPlacement, and an embedded subproblem, termed OrientationScheduling. The OrientationScheduling subproblem, for a fixed sensor placement, is modeled as a 2-player zero-sum game between the defender and the intruder, where the defender seeks an orientation strategy for the deployed sensors to minimize the probability of missed detection, while the intruder seeks a path selection strategy to maximize it. Since the defender’s strategy space grows combinatorially with the number of sensors and orientations, solving the game via standard linear programming becomes prohibitive. To this end, we develop an iterative and efficient equilibrium-seeking algorithm that exploits the structure of the game’s payoff function and establish theoretical guarantees for convergence to the Nash equilibrium (NE) of the game. This NE value of the game is then used as a utility measure in the  SensorPlacement meta problem. We show that this game-value-based utility function is weakly submodular over the set of sensor placements and propose a greedy placement algorithm with near-optimality guarantees. To our knowledge, this is the first unified framework to integrate game-theoretic utility design with (weak) submodular optimization, enabling principled joint optimization of sensor placement and orientation scheduling. With extensive simulations, we demonstrate that the proposed approach achieves near-optimal detection performance while significantly reducing computation time compared to  baselines.
\end{abstract}

\textbf{Keywords:}
Sensor Placement and Scheduling, Nash Equilibrium, Combinatorial Games, Weak-Submodular Optimization
\section{Introduction}
In modern security and surveillance systems, effective monitoring depends on both where sensors are placed and how their sensing modes (i.e., field-of-view (FoV) orientations) are scheduled over time \cite{guvensan2011coverage,murray2007coverage}. While prior work has extensively studied sensor placement \cite{dhillon2003sensor,bhargav2023complexity,jourdan2008optimal} and orientation scheduling \cite{bhargav2025sensor, xu2024performance, osais2009sensor} as separate problems, these two decisions are fundamentally coupled in practice. Sensor placement determines the geometric structure of coverage, e.g., the overlap between sensing regions, the set of intruder paths that can potentially be monitored. Scheduling determines how this spatial coverage is realized over time, particularly when sensors have steerable orientations with limited FoV and operational constraints.

\begin{figure}[!t]
\centering
\includegraphics[width=360pt]{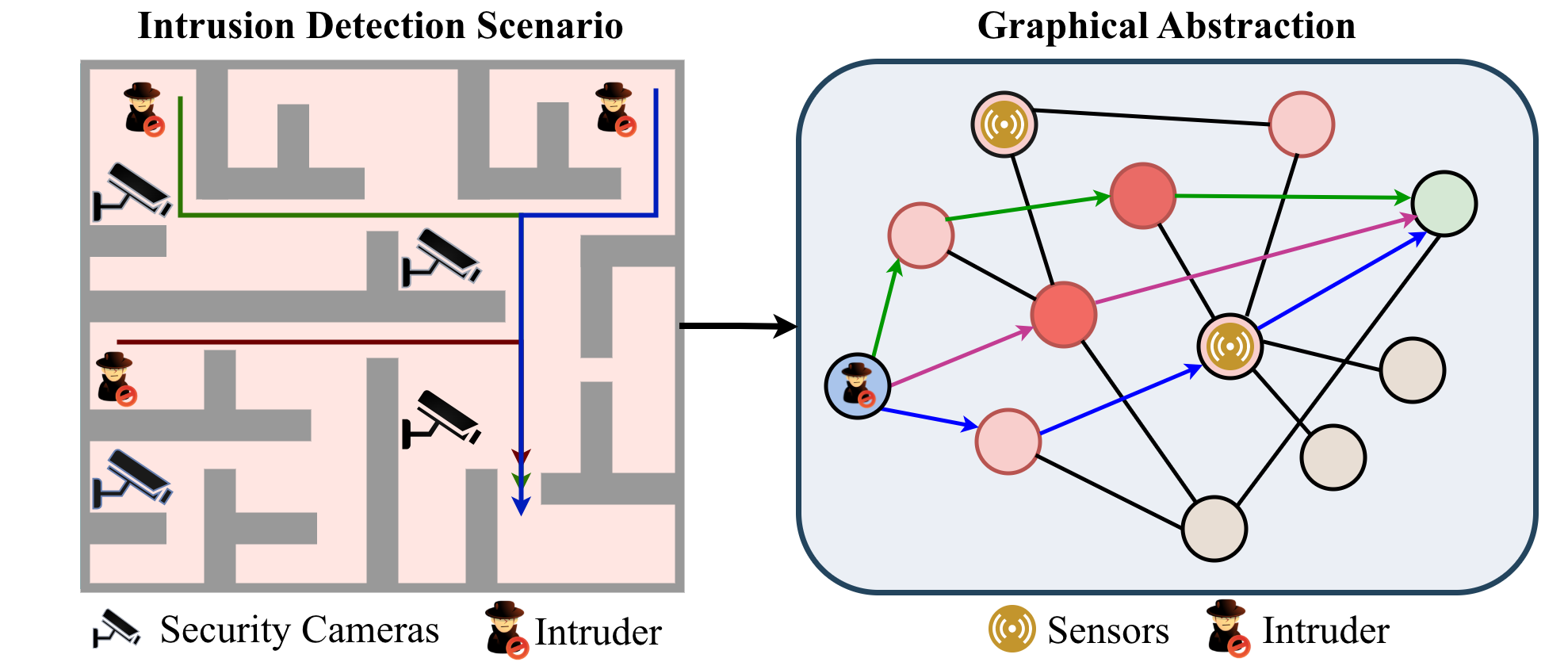}
    \caption{Illustration of an intrusion detection scenario and its graph abstraction. Left: Physical environment consisting of multiple regions, showing candidate sensor placements (security cameras) and representative intruder trajectories (colored paths). Right: Graph representation of the environment, where nodes correspond to different regions and edges encode feasible transitions between regions. Colored edges depict representative intruder paths encoded within the graph structure.}
    \label{fig:game}
\end{figure}

Optimizing only placement is insufficient if sensing orientations are not coordinated appropriately. Even a well-distributed deployment may exhibit temporal blind spots if scheduling is poorly designed~\cite{murray2007coverage}. Conversely, sophisticated sensor scheduling strategies cannot compensate for fundamentally weak spatial placement~\cite{dhillon2003sensor,jourdan2008optimal}. The placement configuration defines the feasible coverage regions of the scheduler, while scheduling dictates how effectively the deployed sensing resource is utilized. This intrinsic coupling motivates a joint optimization framework that accounts for both spatial and temporal decisions. Also, the need for joint design becomes more pronounced in adversarial environments. In the presence of an adaptive intruder, deterministic sensing strategies are vulnerable to exploitation \cite{an2017stackelberg}. An intelligent adversary can identify predictable orientation patterns or recurring coverage gaps and choose traversal paths accordingly. Game-theoretic models provide a principled framework for capturing this strategic interaction between the defender and the intruder \cite{pirani2021game}. Within such frameworks, the defender employs randomized scheduling policies, typically represented as mixed strategies, to reduce predictability and ensure robustness against strategic evasion.

In this paper, we consider an intrusion detection task in which a defender must deploy a limited number of sensors to monitor a protected environment against an adaptive intruder. Each sensor can be placed at selected candidate locations and may operate under multiple orientations (or sensing modes). 
%The defender seeks to minimize the probability of missed detection, while the intruder strategically tries to select traversal paths through the environment to evade sensing coverage, i.e.,  maximize the probability of missed detection. 
The key challenge lies in the coupled nature of spatial sensor deployment and orientation scheduling decisions, particularly under strategic adversaries.

\subsection{Contributions}

We propose a unified framework that jointly optimizes both the sensor placement and orientation scheduling in the presence of an adaptive intruder. We decompose the joint optimization task into a meta problem, termed \sense, and an embedded subproblem, termed \orient, enabling a principled integration of spatial deployment and temporal strategy design. For a given sensor placement, we model the \orient~subproblem as a 2-player zero-sum game between the defender and the intruder over a graph (see Figure~\ref{fig:game}). The defender seeks a randomized orientation strategy to minimize the probability of missed detection, while the intruder seeks a randomized path selection strategy to maximize it. The NE value of this game characterizes the equilibrium payoff under mutual best responses, i.e., the best expected missed detection probability that the defender can guarantee against an intruder that optimally responds. We therefore use the NE value as the utility function for the meta $\scenario{SensorPlacement}$ problem, which seeks to identify the placement that optimizes this detection performance utility.

Our key algorithmic and theoretical contributions are summarized as follows.

\begin{itemize}

\item \textit{Scalable algorithm for equilibrium-seeking --}
In the \orient~game, the defender’s strategy space grows combinatorially with the number of sensors and orientations, and thus solving for the game value using standard linear programming (LP) approaches becomes computationally prohibitive. We exploit structural properties of the game's payoff function and  develop an efficient equilibrium-seeking algorithm that is based on a multiplicative weight update procedure (as in \cite{freund1996game}), and establish theoretical guarantees for convergence to the NE value of the sensor scheduling game.

\item \textit{Efficient algorithm for sensor placement --}
We prove that the game-value-based utility function of the \sense~problem is weakly submodular over the set of  sensors. Leveraging this property, we develop a greedy placement algorithm and establish near-optimality guarantees for the sensor placement problem.

\end{itemize}

Finally, through extensive simulations, we demonstrate that our approach achieves near-optimal detection performance while reducing computational complexity compared to existing baselines.

The game-theoretic sensor scheduling problem (i.e., \orient) was first introduced in our prior conference paper~\cite{bhargav2025sensor}. In~\cite{bhargav2025sensor}, we developed a distributed variant of the weighted majority algorithm~\cite{freund1996game} and established convergence guarantees to the NE of the scheduling game. The work in~\cite{bhargav2025sensor} also focused on online refinement of equilibrium strategies under uncertain sensor models. This paper makes two significant advances. First, we develop an equilibrium-seeking algorithm with faster convergence rates and better computational efficiency compared to~\cite{bhargav2025sensor}. Second, we extend the scope of the problem from orientation scheduling alone to the joint placement and scheduling setting by embedding the game-theoretic orientation scheduling framework within a meta-level spatial placement optimization. To the best of our knowledge, this is the first unified framework to integrate game-theoretic utility design with (weak) submodular optimization, enabling principled joint optimization of sensor placement and orientation scheduling.

\subsection{Related Work}

In this section, we review existing works on game-theoretic and submodular-optimization-based approaches for sensor resource management, and discuss how our contributions relate to and extend beyond these works.

\noindent \textbf{Game-Theoretic Sensor Resource Management --}
Security resource allocation problems have been extensively studied using game-theoretic models, particularly within the Stackelberg security game framework \cite{sinha2018stackelberg,korzhyk2010complexity}. In such settings, the defender commits to a mixed strategy, and the attacker best-responds after observing it. For example, \cite{paruchuri2008playing} models airport patrol scheduling as a Stackelberg game where the defender allocates patrol resources and the attacker chooses a pure attack strategy. These models are well-suited for settings where the defender can commit to a strategy and the attacker responds sequentially. In contrast, our work considers a concurrent-move setting in which both the defender and the intruder select mixed strategies without real-time knowledge of the opponent’s action. Under such conditions, neither player can observe or respond to the exact realized strategy of the other. As shown in \cite{korzhyk2011stackelberg}, Stackelberg and Nash equilibria coincide in this class of games. We therefore adopt a zero-sum game formulation to model sensor orientation scheduling, which naturally captures mutual uncertainty and strategic adaptation. \\

\noindent \textbf{Solving Large Games with Combinatorial Action Spaces --}
A significant body of work addresses the challenge of computing equilibria in games with large or combinatorial strategy spaces \cite{ganzfried2015endgame,lipton2003playing,sandholm2015abstraction}. Techniques such as abstraction, sampling, and iterative regret minimization have been developed to enable scalable equilibrium computation. For instance, \cite{li2020structure} proposes an iterative structure-learning framework to approximate solutions in many-player games. In \cite{li2023decision,brown2019deep}, counterfactual regret minimization methods like \scenario{CRF-MIX} are developed to solve team-adversary games with combinatorial action spaces. However, these works do not provide theoretical guarantees for near-optimality of the resulting strategies to the game's equilibrium. Unlike generic abstraction or regret-minimization frameworks, our approach is tailored to the sensing structure of the problem and is equipped with convergence guarantees to the exact NE.\\

\noindent \textbf{Submodular Optimization for Sensor Placement --}
A substantial body of literature addresses sensor placement by leveraging submodularity or weak-submodularity properties of the underlying utility functions~\cite{hashemi2020randomized,ye2021near,ghasemi2019online,bhargav2024submodular}. These structural properties enable efficient approximation via greedy algorithms with provable near-optimality guarantees~\cite{nemhauser1978analysis,krause2007near,kaya2025randomized,hashemi2019submodular,bhargav2025robust}. In line with this direction, we establish that the game-value-based objective function of the \sense~problem is weakly submodular over the set of deployed sensors. This structural characterization allows us to design a greedy placement algorithm with theoretical performance guarantees. To the best of our knowledge, integrating game-value-based utilities within a weak-submodular optimization framework has not been previously explored in the sensor selection literature.
\section{Background and Preliminaries}

In this section, we introduce the notation and present some preliminaries of the intrusion detection setting, including the environment and the modeling of the defender and the intruder.

\noindent\textbf{Environment --} We model the protected environment as a graph $G = (N,E)$, where the node set $N$ represents distinct regions of the environment and the edge set $E$ represents feasible transitions available to the intruder between regions. An intruder path is modeled as a sequence of nodes in $G$ that respects the connectivity defined by $E$ (e.g., see Figure~\ref{fig:game}).

\noindent\textbf{Defender --} Let $\mathcal{V} = \{s_1, \hdots, s_p\}$, with $\abs{\mathcal{V}} = p$, denote the set of available sensors that can be placed on a subset of nodes in the graph. Let $\mathcal{L} \subseteq N$ denote the set of candidate placement nodes.  
We define the ground set of feasible sensor--location assignments as the Cartesian product
$\mathcal{Q} \triangleq \mathcal{V} \times \mathcal{L}$, where an element $e=(s,\ell)\in\mathcal{Q}$ represents placing sensor $s\in\mathcal{V}$ at location $\ell\in\mathcal{L}$. For simplicity, we refer to $\cals \subseteq \mathcal{Q}$ as a \textit{sensor placement (or sensor set, sensor deployment)} and elements of $\cals$ as \textit{sensors},  denoted by $s_q \in \cals$, which implicitly correspond to a sensor--location pair in $\mathcal{Q} = \mathcal{V} \times \mathcal{L}$.

\noindent\textit{Sensor's Action Space --} Each sensor has $d$ possible orientations (also referred to as actions), $\Theta = \{ \theta_1, \hdots, \theta_d\}$, with $\abs{\Theta} = d$. Each sensor covers a subset of nodes in the graph (i.e., regions of the environment)  depending on its orientation, which may not be restricted to its neighboring nodes.  Let $I = \{i_1, i_2, \hdots, i_m \}, m = |I|$, denote the set of joint orientation strategies of the defender. For a sensor set $\mathcal{S} \subseteq \mathcal{Q}$, the number of joint orientation strategies is $m = \abs{\Theta}^{\abs{\mathcal{S}}}$, which is exponential in the number of sensors in $\mathcal{S}$. 

\noindent\textit{Sensing Model -- }Each sensor $s_q \in \mathcal{V}$ has a probability of detection $p_{detect,q}$. Specifically, if the intruder traverses through a node which is covered by a sensor $s_q$, it will be detected by that sensor with a probability $p_{detect,q}$, and will go undetected by that sensor with probability $(1-p_{detect,q})$. We assume $p_{detect,q} \in [p_{min,q}, p_{max,q}]$, where $0<p_{min,q} < p_{max,q} < 1, q = 1,...,p$. Furthermore, a detection event from a sensor $s_q$ is independent of detection events from other sensors in $\mathcal{V}\setminus\{s_q\}$, conditioned on the intruder's path. This reflects realistic sensing conditions, where sensors are imperfect and detection probabilities are bounded by physical limitations, and sensor detections are typically modeled as conditionally independent events when operating under spatial separation and/or distinct sensing mechanisms~\cite{bhargav2025sensor}.

\noindent\textbf{Intruder --} We assume that the intruder can traverse the protected environment through multiple feasible paths, and that the set of such paths is finite. Let $J = \{j_1, j_2, \ldots, j_n\}$, with $n = |J|$, denote the set of all admissible paths. This set constitutes the intruder’s pure strategy space.

\begin{remark}
    If the set of intruder paths is not known a priori, it can be constructed from the underlying environment model. For example, one may enumerate admissible paths between designated origin and destination nodes using standard graph-theoretic path enumeration techniques. Alternatively, the path set may be obtained from reinforcement learning (RL) policies trained to navigate the environment, i.e., the trajectories or paths which are induced by the learned policy (e.g., see~\cite{ghasemi2021multiple}). 
\end{remark}

\section{ Sensor Orientation Scheduling}
In this section, we first aim to solve the scheduling problem for a given sensor placement by formulating the \orient~problem as a zero-sum game. We characterize key structural properties of the game’s payoff function and leverage them to design an efficient equilibrium-seeking algorithm with provable performance guarantees.

\subsection{Sensor Scheduling Problem Formulation}
For a given sensor placement, the goal of the defender is determine an optimal strategy to orient the sensors that minimizes the probability of not detecting the intruder. Let $\mathcal{S} \subseteq \mathcal{Q}$ denote the set of sensors placed (deployed in the environment). If a certain node is covered by multiple sensors in $\mathcal{S}$ and the intruder has traversed that node, then the overall probability of missed detection at that node is the product of the probability of missed detection of all the sensors covering that node. For a scheduling strategy $i\in I$ and a path $j \in J$, let $V^q_{ij}$ be the number of nodes that are covered by the sensor $s_q \in \mathcal{S}$ in the orientation strategy $i \in I$ and contained in the intruder's path $j \in J$. The overall probability of missed detection for the strategy pair $(i,j)$ is given by:
\begin{equation}
\label{eq:pmiss_prod}
    p_{miss}(i,j) = \prod_{q=1}^p (1-p_{detect,q})^{ V^q_{ij}}.
\end{equation}
We consider the negative log-likelihood representation for the probability of missed detection  (as multiplicative operations in \eqref{eq:pmiss_prod}  can lead to underflow and numerical instability), given by
\begin{equation}
\label{eq:log_p_miss}
    -\log(p_{miss}(i,j)) =  -\sum_{q=1}^p V^q_{ij}  \log (1-p_{detect,q}).
\end{equation}
We refer interested readers to \cite{krause2008near,breitung1991probability} for discussions on the use of log-likelihood representations for probability measures and their analytical advantages. We note that minimizing $p_{\text{miss}}(i,j)$ is equivalent to maximizing $-\log p_{\text{miss}}(i,j)$.

Let $A^{\cals},B^{\cals} \in \mathbb{R}^{m \times n}$ denote the payoff matrices for the defender and intruder, respectively, for the game induced by the deployed sensor set $\cals$. The payoff values for a pure strategy pair $(i,j)$ for the defender and intruder are given by $A^{\cals}_{ij} = - \log(p_{miss}(i,j))$ and $B^{\cals}_{ij} =   \log(p_{miss}(i,j))$, respectively. Since $A^{\cals} + B^{\cals} = 0$, the game is zero-sum, and without loss of generality it can be represented using only the payoff matrix $A^{\cals}$. Additionally, in a 2-player zero-sum game, each player may randomize (mix) over their available pure strategies; a mixed strategy is defined as a probability distribution over those strategies. Accordingly, we denote the defender’s and intruder’s mixed strategy spaces as 
$X = \Delta_m := \{ x \in \mathbb{R}^m_+ \mid \mathbf{1}^\top x = 1 \}$ 
and 
$Y = \Delta_n := \{ y \in \mathbb{R}^n_+ \mid \mathbf{1}^\top y = 1 \}$, 
respectively. For notational convenience, we also refer to the defender as the $x$-player and the intruder as the $y$-player. As in a standard zero-sum game with matrix $A^{\cals}$, the game's payoff value $v$, for a pair of mixed strategies $(x,y)$, is given by $v = x^{\top}A^{\cals}y$. 

\begin{definition}[Nash Equilibrium (NE)]
    A pair of mixed-strategies $(x^*,y^*)$ is a \textit{NE} of the zero-sum game with payoff matrix $A$ if $ x^{{* \top}}A y^* \geq x^{\top}Ay^*$ and $ x^{{* \top}}A y^* \leq x^{{* \top}}Ay, \hspace{2pt} \forall x \in X, \hspace{1pt} \forall y \in Y$. The resulting payoff $v^* = x^{* \top}Ay^*$ is called the optimal game value (or NE value).
    \end{definition}

In other words, a NE profile $(x^\star, y^\star)$ is a pair of mixed strategies such that neither player can improve their expected payoff by unilaterally deviating. The resulting value $v^\star = x^{\star \top} A y^\star$ is the best achievable utility for either of the players. We are now in place to formalize the \orient ~problem.

\begin{Problem}[\orient]
\label{prob:orient_prob}
Consider a given  sensor placement $\cals \subseteq \mathcal{Q}$ in the environment. The \orient~problem seeks to compute the NE strategies and value of the induced zero-sum game with payoff matrix $A^{\cals}$. Specifically, we aim to solve
\begin{equation}
\label{eq:zero-sum_game}
(x^\star, y^\star)
\in \argmax_{x \in X} \; \argmin_{y \in Y} \; x^\top A^{\cals} y,
\end{equation}
where $(x^\star, y^\star)$  denote the NE strategies and the optimal value of the game given by $
v(\cals ) = x^{\star \top} A^{\cals} y^\star$.
\end{Problem}

The solution to Problem~\ref{prob:orient_prob} admits a natural operational interpretation. At regular scheduling cycles (e.g., daily or hourly reconfiguration cycles), the defender samples an orientation configuration according to the equilibrium distribution $x^\star$ and deploys it. Over a sufficiently long horizon, no alternative scheduling policy can guarantee a strictly higher detection performance  against a rational intruder. In this sense, the equilibrium strategy provides a maximin-optimal and robustness certificate for persistent deployment. Additionally, the scheduling strategy $x^\star$, computed via Algorithm~\ref{alg:orient} hedges against the most rational intruder (best-response). If the intruder's behavior is adversarial or deviates from being fully rational (e.g., bounded rationality), the realized detection performance can only improve for the defender~\cite{freund1996game}. This follows directly from the concept of a NE -- no player can improve their payoff by unilaterally deviating from their equilibrium strategy.

\subsection{Efficient Algorithm for Solving the Game}

We propose a fast and scalable algorithm that leverages the structure of the game matrix $A^\cals$ for efficiently solving for the NE. For a moderate number of strategies, the optimization problem in \eqref{eq:zero-sum_game} can be solved to compute the NE strategies using a bi-level Linear Programming (LP) formulation \cite{nehme2009two}. However, in the \orient~problem, the defender has an exponentially sized set of scheduling strategies, i.e.,  $m = |\Theta|^{|\cals|}$. For example, if $d=\lvert\Theta \rvert = 4$ (i.e., each sensor has 4 possible orientations) and $\lvert\mathcal{S}\rvert=10$ (i.e., there are $10$ sensors), then $\lvert I\rvert = 4^{10}=1,048,576$.  Solving for the exact NE using an LP can become computationally intractable due to a large number of variables and constraints. To this end, we present the Distributed Equilibrium Seeking (DES) algorithm (Algorithm \ref{alg:orient}). This algorithm is a variant of the \textit{Weighted Majority algorithm (WMA)}  originally proposed in \cite{freund1996game}. The DES algorithm leverages the structure of the game to perform multiplicative weight updates locally for each sensor, which will then be aggregated to estimate the joint scheduling strategy for the defender.

In WMA \cite{freund1996game}, one of the players (say, the $y$-player) maintains a weight vector $w$ over its pure strategies and updates it iteratively based on feedback from the interaction with the $x$-player. At iteration $t$, the $x$-player computes a best response $x_t = \argmin_{x \in X} x^\top A y_t$ to the current mixed strategy $y_t$ (by solving an LP), where $A$ is the game matrix. The induced loss vector for the $y$-player is given by $l_t = x_t^\top A$. The $y$-player then updates its weights using a multiplicative rule of the form $w_{t+1}(j) = w_{t}(j) \beta^{- l_{t}(j)}$, where $\beta \in (0,1)$ is a  multiplicative penalty parameter, and updates the mixed strategy as $y_{t+1} = w_{t+1}/{\mathbf{1}^\top w_{t+1}}$. Asymptotically, i.e., as the total number of iterations $T \to \infty$, the average strategies $\bar{x}_T = \frac{1}{T} \sum_{t=1}^T x_t$ and $\bar{y}_T = \frac{1}{T} \sum_{t=1}^T y_t$ converge to a NE of the zero-sum game. In particular, the average strategies $(\bar{x}_T, \bar{y}_T)$ approach the set of equilibrium strategies, and the average payoff $\frac{1}{T} \sum_{t=1}^T x_t^\top A y_t$ converges to the game value.

\begin{remark}
One may naturally ask why the standard WMA cannot be applied directly to the \orient~problem. As indicated earlier, the challenge arises from the size of the defender’s joint strategy space, which grows as $\mathcal{O}(|\Theta|^{|\cals|})$ with the number of deployed sensors $|\cals|$. Consequently, computing an exact best response at each iteration would be computationally prohibitive.
\end{remark}

Now, we define the set of sub-game payoff matrices $\mathcal{A} = \{ A^1, A^2, \hdots, A^{|\cals|} \}$, where $A^q \in \mathbb{R}^{d \times n}$ corresponds to sensor $s_q \in \mathcal{S}$, given by
\begin{equation}
    \label{eq:small_A}
    A^q_{kj} =
     - V^q_{kj} \times \log(1-p_{detect,q}),
\end{equation}
where $V^q_{kj}$ contains all the nodes that are covered by the sensor $s_q$ oriented in the direction $\theta_k$ and contained in the intruder path $j$. The matrix $A^q$ contains the payoffs/log-probability of detection associated with sensor $s_q$'s interaction with the intruder in isolation. This can be effectively seen as the payoff matrix of a game between the intruder and the sensor $s_q$ alone.
In Algorithm~\ref{alg:orient}, the defender estimates the best response against the intruder’s mixed strategy in a distributed manner, using these sub-game payoff matrices,  rather than solving a single large LP at every iteration.

\begin{algorithm}[!t]
\caption{ Distributed Equilibirum Seeking (DES)}\label{alg:orient}
\begin{algorithmic}[1]
\Require{ Game matrix $A^\cals$, Set of sub-game matrices $\mathcal{A} = \{A^1,\hdots, A^{|\cals|}\}$, $\beta \in (0,1)$, Number of Iterations $T$}
\Ensure  {Mixed-strategies $(\bar{x}, \bar{y})$ and value $\bar{v}$}
\State \textbf{Initialize:} $w_1(j) = 1/|J|, j = 1,...,n$

\For {$t = 1, \hdots, T$}
\State Obtain $y_t = w_t/\mathbf{1}^\top w_t$
\State Solve $\sigma^{ q \star}_t \in \argmax_{\sigma \in \Delta_{|\Theta|}} \sigma^{\top}A^{q}y; \hspace{2pt} q = 1, \hdots, |\cals|$ 
\State Estimate joint-strategy for sensors
\begin{align}
    x_{t}(i) = \prod_{q = 1}^{p} \sigma^q_{t}(k_{q,i}), \forall i \in I
\end{align}
\State Compute intruder's loss $\ell_t(\cdot) = x_t^{\top} A^{\cals}$ (as in \eqref{eq:intruder_loss})
\State Update intruder's weights
\begin{equation}
    \label{eq:dist_weight}
    w_{t+1}(j) = w_{t}(j) \beta^{-\ell_t(j) }; \quad j = 1,\hdots,|J|
\end{equation}
\EndFor

\State \Return $(\bar{x},\bar{y})  = (\frac{1}{T} \sum_{t=1}^T x_t, \frac{1}{T} \sum_{t=1}^T y_t);\hspace{2pt} \bar{v} = \bar{x}^\top A^\cals \bar{y}$

\end{algorithmic}
\end{algorithm}

Let $\sigma^q \in \Delta_{|\Theta|}$ denote the mixed strategy of sensor $s_q$ over its set of orientations $\Theta$, where $\Delta_{|\Theta|} := \{ \sigma \in \mathbb{R}^{|\Theta|}_+ \mid \mathbf{1}^\top \sigma = 1 \}$ denotes the $|\Theta|$-dimensional probability simplex. In line 4 of Algorithm~\ref{alg:orient}, the defender solves an LP to obtain the best-response mixed-strategy $\sigma^{\star q}_t \in \argmax_{\sigma \in \Delta_{|\Theta|}} \sigma^{\top}A^{q}y_t$, for each sensor $s_q \in \cals$\footnote{We note that Line 4 of Algorithm~\ref{alg:orient} admits a fully-distributed implementation. 
The defender can compute the per-sensor best responses independently and in parallel (e.g., via multi-threaded or distributed processing), thereby improving scalability with respect to $|\cals|$.}. Then, the overall mixed-strategy $x_t$ is obtained by the product distribution 
$ x_{t}(i) = \prod_{q = 1}^{p} \sigma^q_{t}(k_{q,i}), \forall i \in I$, where $k_{q,i}$ refers to the orientation of sensor $s_q$ in the joint strategy $i$.  For notational convenience, we denote the product distribution of the strategies $\sigma^q$ by 
$x = \bigotimes_{s_q \in \cals} \sigma^q$. The intruder's loss vector is then given by
\begin{equation}
    \label{eq:intruder_loss}
    \ell_t = x_t^\top A^{\cals},
\end{equation}

\noindent which is used to update the intruder's weights in Line 7 of Algorithm~\ref{alg:orient}. We now have the following result showing that the defender’s joint best response decomposes across sensors. 

\begin{theorem}
\label{thm:dist_best_response}
Let $\sigma^{q\star} \in \argmin_{\sigma^q \in \Delta_{|\Theta|}} \sigma^{q\top} A^q y$, 
 denote the best response (mixed-strategy)  of sensor $s_q \in \cals$ with respect to its sub-game matrix $A^q$ against a given intruder's mixed-strategy $y \in Y$. Define the joint mixed-strategy of the defender as  
$x^\star = \bigotimes_{s_q \in \cals} \sigma^{q\star}$. 
Then $x^\star$ satisfies
\[
x^\star \in \argmin_{x \in X} x^\top A^{\cals} y.
\]
\end{theorem}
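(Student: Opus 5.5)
The plan is to exploit the additive structure of the payoff in \eqref{eq:log_p_miss}. For every joint orientation $i \in I$ and path $j \in J$, the count $V^q_{ij}$ depends on $i$ only through the orientation $k_{q,i}$ of sensor $s_q$. By \eqref{eq:small_A} this means $V^q_{ij} = V^q_{k_{q,i} j}$, and therefore
\begin{equation*}
A^{\cals}_{ij} = \sum_{s_q \in \cals} A^q_{k_{q,i}\, j}.
\end{equation*}
This identity is the first step and carries the whole argument. It says that $A^{\cals}$ is a sum of ``lifted'' sub-game matrices.

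In the second step, for an arbitrary $x \in X$ we compute $x^\top A^{\cals} y = \sum_{s_q\in\cals} \sum_{i} x(i) \sum_j A^q_{k_{q,i} j}\, y(j)$. Grouping the joint strategies $i$ by the value $k_{q,i}=k$ gives the marginal $\mu^q_x(k) := \sum_{i : k_{q,i}=k} x(i)$, which lies in $\Delta_{|\Theta|}$. We then obtain
\begin{equation*}
x^\top A^{\cals} y = \sum_{s_q \in \cals} (\mu^q_x)^\top A^q y .
\end{equation*}
So the payoff against a fixed $y$ depends on $x$ only through its per-sensor marginals. Crucially, the marginal map sends $X$ onto $\prod_q \Delta_{|\Theta|}$: it is onto because any tuple of marginals is realized by their product distribution.

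The third step is the optimization argument. For every $x \in X$ we have $(\mu^q_x)^\top A^q y \ge \sigma^{q\star\top} A^q y$ for each $q$, by the best-response definition of $\sigma^{q\star}$. Summing over $q$ gives $x^\top A^{\cals} y \ge \sum_q \sigma^{q\star\top} A^q y$. It remains to show that $x^\star = \bigotimes_q \sigma^{q\star}$ attains this lower bound. Its marginals are exactly $\sigma^{q\star}$, because the other product factors sum to one, so by the decomposition $x^{\star\top} A^{\cals} y = \sum_q \sigma^{q\star\top} A^q y$. Hence $x^\star$ is a minimizer. The same argument with $\ge$ replaced by $\le$ handles the $\argmax$ convention used in Line 4 of Algorithm~\ref{alg:orient}. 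I will note this so the result matches the algorithm, since the statement itself uses $\argmin$.

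I expect the only delicate point to be the first step: making precise that $V^q_{ij}$ depends only on $k_{q,i}$, and that the product in \eqref{eq:pmiss_prod} runs over the deployed sensors $s_q \in \cals$ rather than over all $p$ sensors. I will state this explicitly as a modeling fact: each sensor's coverage is determined by its own orientation. Everything after that is a routine marginalization identity together with linearity, which is the separable (potential-like) structure of the game.
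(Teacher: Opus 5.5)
Your proposal is correct and follows essentially the same route as the paper's proof. Both rest on the additive identity $A^{\mathcal{S}}_{ij}=\sum_{q} A^q_{k_{q,i}\,j}$ and on marginalizing $x$ to per-sensor distributions; both then observe that the product $\bigotimes_q \sigma^{q\star}$ has marginals $\sigma^{q\star}$ and therefore attains $\sum_q \min_{\mu_q}\mu_q^\top A^q y$. Your direct lower-bound-plus-attainment step is slightly leaner than the paper's ``minimization separates'' step, since it does not actually need surjectivity of the marginal map. Your remarks on the maximization convention in Line~4 and on indexing the product over deployed sensors are apt.
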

In other words, the product of the individual best responses $\sigma^{q \star}$ computed per sensor, using their corresponding sub-game matrices $A^q, \hspace{2pt} q=1, \hdots, |\cals|$, is a best response mixed strategy for $y$ with respect to the full game matrix $A^{\cals}$. We defer all detailed theoretical proofs to the Appendix. 

\begin{remark}
 A centralized best response requires solving an LP over $|\Theta|^{|\cals|}$ variables, with the  complexity under interior-point methods scaling as $\mathcal{O}(|\Theta|^{3|\cals|})$ \cite{potra2000interior}. In contrast,  Algorithm \ref{alg:orient} requires solving $|\cals|$ independent LPs, each with $|\Theta|$ variables, resulting in total complexity in the order of $\mathcal{O}(|\cals|\,|\Theta|^{3})$. This is an exponential reduction in complexity with respect to $|\cals|$.
\end{remark}

Now, combining Theorem~\ref{thm:dist_best_response} with the results in Sections 2.4 and 2.5 of \cite{freund1996game}, we have the following result.

\begin{theorem}
\label{thm:nashgap}
For a given sensor placement $\cals \subseteq \mathcal{Q}$, let $(\bar{x}, \bar{y})$ and $\bar{v}$ denote the mixed strategies and the corresponding game value returned by Algorithm~\ref{alg:orient} after $T$ iterations, with 
\begin{equation}
\label{eq:beta_wma}
\beta = \left(1 + \sqrt{\frac{2 \ln |J|}{\tilde{L}}} \right)^{-1},
\end{equation}
where $\tilde{L} \ge \min_{j \in J} \sum_{t=1}^{T} \ell_t(j)$. Let $v(\cals)$ denote the NE value of the game with payoff matrix $A^{\cals}$ (i.e., the optimal value of Problem~\ref{prob:orient_prob}). Then,
\begin{equation}
\label{eq:nashgap}
 v(\cals) - \bar{v} 
\le
\underbrace{
\frac{\sqrt{2 \tilde{L}\,\ln |J|}}{T}
+
\frac{\ln |J|}{T}
}_{\epsilon_T}.
\end{equation}
\end{theorem}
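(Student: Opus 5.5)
The plan is to reproduce the Freund--Schapire regret argument (Sections 2.4--2.5 of \cite{freund1996game}), with Theorem~\ref{thm:dist_best_response} supplying the defender's best response. The intruder is the minimizer of $x^\top A^{\cals} y$, so $\ell_t(j) = x_t^\top A^{\cals} e_j$ is the loss the intruder incurs on path $j$ at round $t$. The weight update \eqref{eq:dist_weight} then plays the role of the multiplicative-weights learner. For this to penalize paths with large detection payoff, I read the exponent as $\beta^{\ell_t(j)}$; the sign in \eqref{eq:dist_weight} should be checked against this convention.

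\textbf{Step 1 (regret bound).} Apply the multiplicative-weights bound of \cite{freund1996game} to the intruder's loss sequence $\ell_1,\dots,\ell_T$, with $\beta$ tuned as in \eqref{eq:beta_wma}. This gives
\[
\sum_{t=1}^T y_t^\top \ell_t \;\le\; \min_{j\in J}\sum_{t=1}^T \ell_t(j) + \sqrt{2\tilde L \ln|J|} + \ln|J| .
\]
Since $y_t^\top \ell_t = x_t^\top A^{\cals} y_t$, dividing by $T$ yields
\[
\frac1T\sum_t x_t^\top A^{\cals} y_t \le \min_j \bar x^\top A^{\cals} e_j + \epsilon_T .
\]

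\textbf{Step 2 (best response gives the lower bound).} By Theorem~\ref{thm:dist_best_response}, the product of per-sensor best responses formed in Lines 4--5 is an exact best response of the defender to $y_t$ in the full game $A^{\cals}$. The theorem is stated for $\argmin$, but the same decomposition holds verbatim for $\argmax$, which is the defender's objective here. Hence
\[
x_t^\top A^{\cals} y_t = \max_{x\in X} x^\top A^{\cals} y_t \ge \min_{y\in Y}\max_{x\in X} x^\top A^{\cals} y = v(\cals)
\]
by the minimax theorem. Averaging over $t$ gives $v(\cals) \le \frac1T\sum_t x_t^\top A^{\cals} y_t$.

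\textbf{Step 3 (chain).} Combining the two steps,
\[
v(\cals) \le \min_{j} \bar x^\top A^{\cals} e_j + \epsilon_T = \min_{y\in Y}\bar x^\top A^{\cals} y + \epsilon_T \le \bar x^\top A^{\cals}\bar y + \epsilon_T = \bar v + \epsilon_T ,
\]
which is \eqref{eq:nashgap}.

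\textbf{Main obstacle.} The delicate point is the hypothesis of the Freund--Schapire bound with tuned $\beta$, which requires losses in $[0,1]$. Here the entries $A^{\cals}_{ij} = -\sum_q V^q_{ij}\log(1-p_{detect,q})$ are nonnegative because $p_{detect,q}<1$. They are bounded, since $p_{detect,q} \le p_{max,q} < 1$ and $V^q_{ij}$ is finite, but they may exceed $1$.
\begin{itemize}
\item My first attempt is to rescale $A^{\cals}$ by $M=\max_{i,j}A^{\cals}_{ij}$, apply the bound to $A^{\cals}/M$, and rescale back. This either multiplies the $\ln|J|$ term by $M$ or needs $\tilde L$ measured in the normalized units, so the constants in $\epsilon_T$ must be reconciled.
\item Alternatively, I would assume the payoffs are normalized to $[0,1]$, as is implicit in citing \cite{freund1996game}.
\end{itemize}
A secondary check is the sign convention of the update noted above, which Step 1 depends on.
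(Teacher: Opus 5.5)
Your proposal is correct and is essentially the paper's argument. The paper's proof invokes Theorem~\ref{thm:dist_best_response} to treat Algorithm~\ref{alg:orient} as the Freund--Schapire multiplicative-weights-versus-best-response dynamics and then cites their tuned-$\beta$ regret bound (Corollary~2); your Steps 1--3 simply write that out. Your two caveats point to gaps in the paper rather than in your argument: the bound needs losses in $[0,1]$ (the paper normalizes $A^{\cals}$ only in its experiments), and the minimizing intruder needs the update $\beta^{\ell_t(j)}$ rather than $\beta^{-\ell_t(j)}$. Adopting the normalization assumption, rather than rescaling, is the way to close it with the stated constants.
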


The quantity $\Tilde{L}$ in Theorem \ref{thm:nashgap} is an upper bound on the sum of losses over $T$ iterations, which can be loosely bounded by $T$. Combining this bound with \eqref{eq:nashgap}, we have that $\epsilon_T = \mathcal{O}(1/\sqrt{T})$. Therefore, we have $\epsilon_T \to 0$ as $T \to \infty$. Consequently, $\bar{v} \to v(\cals)$, and $(\bar{x}, \bar{y})$ converge to a pair of NE strategies of the game. Additionally, the convergence rate of $\epsilon_T$ in~\eqref{eq:nashgap} depends on $\ln |J|$ and is independent of the number of sensors. In contrast, in our prior work~\cite{bhargav2025sensor}, the convergence rate was dependant on $|S|\ln|\Theta|$, growing linearly with the number of sensors; consequently, as the system size increases, this dependence leads to slower convergence.

\begin{remark}
The choice of $\beta$ in~\eqref{eq:beta_wma} depends on $\tilde{L}$, which may not be known or easily bounded in practice. In such cases, we adopt the following estimation procedure. First, we set $\tilde{L} = T$. For a desired $\epsilon_T$, we estimate $T$ by solving~\eqref{eq:nashgap}. We then use this estimate of $T$ as $\tilde{L}$ in~\eqref{eq:beta_wma} to obtain $\beta$.
\end{remark}
\subsection{Runtime Analysis}
We now discuss the computational advantage of Algorithm~\ref{alg:orient}. Consider the scenario in which the game is solved via a centralized (bi-level) LP formulation. We assume that the LP is solved using standard interior-point methods \cite{potra2000interior}. It is important to note that interior-point methods operate in finite-precision arithmetic and therefore return $\epsilon$-optimal solutions (i.e., with duality gap bounded by $\epsilon$), rather than exact symbolic optima. In practice, exact solutions cannot be obtained due to numerical tolerances and termination criteria. Under standard complexity analysis \cite{potra2000interior}, interior-point methods require $\mathcal{O}(\log(1/\epsilon))$ iterations to reach an $\epsilon$-accurate solution, with each iteration incurring computational cost  of $\mathcal{O}((|\Theta|^{|\cals|} + |J|)^3)$ for the centralized formulation. Consequently, the overall runtime required to obtain an $\epsilon$-optimal solution via the centralized LP formulation scales as
\[
T_{LP}(\epsilon) \triangleq \mathcal{O}\!\left((|\Theta|^{|\cals|} + |J|)^3 \log(1/\epsilon)\right).
\]
In contrast, Algorithm~\ref{alg:orient} requires
\[
T_{\text{alg}}(\epsilon) \triangleq \mathcal{O}\!\left(|\cals|\,|\Theta|^3/\epsilon^2\right)
\]
operations to compute an $\epsilon$-accurate approximation of the game value.  As $|\cals|$ increases, the exponential term $|\Theta|^{|\cals|}$ dominates the polynomial dependence in $T_{\text{alg}}(\epsilon)$, implying that $T_{LP}(\epsilon)$ grows exponentially in the number of sensors, whereas $T_{\text{alg}}(\epsilon)$ scales only linearly in $|\cals|$ and polynomial in $|\Theta|$. Hence, for moderate to large $|\cals|$, the centralized LP-based approach becomes computationally prohibitive, while Algorithm~\ref{alg:orient} maintains scalable performance.

\begin{remark}
    Our framework admits a significant implementation advantage. If a central coordinator computes and broadcasts the intruder mixed strategy $y_t$ at each iteration, no sensor-to-sensor communication is required, and no local weight vectors or orientation distributions need to be exchanged. Given $y_t$, each sensor updates its local orientation distribution independently using only its own sub-game matrix. Importantly, $y_t$ is a single broadcast vector whose dimension depends solely on the intruder strategy space. Consequently, the per-iteration communication cost scales linearly with the size of the intruder strategy set and remains independent of the exponentially large joint orientation space. This separation preserves global equilibrium guarantees while ensuring communication-efficient and scalable execution.
\end{remark}

\section{Sensor Placement Optimization}

In this section, we turn our attention to the meta-problem of optimal sensor placement -- the \sense~problem. We first formalize the placement problem by connecting it to the \orient~problem discussed in the previous section and then characterize the structural properties of its objective function. Leveraging these properties, we develop an efficient sensor placement algorithm with provable performance guarantees.

\subsection{Sensor Placement Problem Formulation}

Recall that the ground set of feasible sensor--location assignments is the Cartesian product
$\mathcal{Q} = \mathcal{V} \times \mathcal{L}$, where an element $e=(s,\ell)\in\mathcal{Q}$ represents placing sensor $s\in\mathcal{V}$ at location $\ell\in\mathcal{L}$.  We assume uniform sensor selection costs and  denote the total deployment budget by $B \in \mathbb{Z}_+$. In addition, to ensure physical feasibility and realizability, we enforce that each sensor can be assigned to at most one location. These constraints can be captured via a matroid intersection over $\mathcal{Q}$.
Specifically, we define the partition matroid $\mathcal{M}=(\mathcal{Q},\mathcal{I}_{loc})$, where
$\mathcal{I}_{loc} := \{ \cals \subseteq \mathcal{Q} \mid |\cals \cap (\{s\}\times\mathcal{L})| \le 1,\ \forall s\in\mathcal{V} \}$, which enforces that each sensor is assigned to at most one location.

The set of feasible deployments is then given by
\begin{equation}
\label{eq:matroid_constr}
    \mathcal{I} \;:=\; \{ \cals \subseteq \mathcal{Q} \mid \cals \in \mathcal{I}_{loc},\ |\cals| \le B \}.
\end{equation}
Equivalently, $\mathcal{I}$ is the intersection of the partition matroid $\mathcal{I}_{loc}$ and a uniform matroid corresponding to the budget constraint $B$. We now formalize the sensor placement problem below.

\begin{Problem}[\sense]
\label{prob:sense_prob}
     Let $v: 2^\mathcal{Q} \to \mathbb{R}_{\ge 0}$ denote the optimal game value obtained by solving the \orient~problem for a placement $\cals \subseteq \mathcal{Q}$.  The \sense~ meta-problem aims to identify  $\cals^\star \subseteq \mathcal{Q}$ that solves the following optimization problem.
\[
\cals^\star = \argmax_{\cals \in \mathcal{I}} \; v(\cals).
\]
\end{Problem}

Problem \ref{prob:sense_prob} is a combinatorial subset selection problem. Such problems are generally NP-hard~\cite{bhargav2023complexity, ye2020complexity}, and computing the exact (optimal) solution can be computationally expensive. A practical alternative is therefore to seek an approximate solution~\cite{bhargav2023complexity, xu2024performance, hashemi2020randomized}. To this end, we propose a greedy algorithm that efficiently approximates the solution of Problem~\ref{prob:sense_prob}.

\subsection{Efficient Algorithm for Sensor Placement}

We first begin by characterizing the structural properties of the utility function of Problem~\ref{prob:sense_prob}.
For a set function $f$, let $f(i \mid \mathcal{T}) \triangleq f(\mathcal{T} \cup \{i\}) - f(\mathcal{T})$ denote the marginal gain of adding element $i$ to set $\mathcal{T}$. 
\begin{definition}[Monotonicity \cite{hashemi2019submodular}]
A set function $f : 2^{\mathcal{X}} \rightarrow \mathbb{R}$ is  monotone non-decreasing if $f(\mathcal{S}) \leq f(\mathcal{T}), \hspace{2pt} \forall \mathcal{S} \subseteq \mathcal{T} \subseteq \mathcal{X}.$
\end{definition}

\begin{definition}[Additive Weak-Submodularity \cite{ hashemi2019submodular}]
A set function 
$f : 2^{\mathcal{X}} \rightarrow \mathbb{R}$ satisfies additive weak-submodularity if 
\[
f(i \mid \mathcal{T}) - f(i \mid \mathcal{S})
\;\le\;
\epsilon_f, \quad
\forall \mathcal{S} \subseteq \mathcal{T} \subseteq \mathcal{X},\;
i \in \mathcal{X} \setminus \mathcal{T},
\]

\noindent where $\epsilon_f$ is the weak-submodularity constant defined as

\begin{equation}
\label{eq:submod_const}
\epsilon_f 
=
\max_{\substack{\mathcal{S} \subseteq \mathcal{T} \subseteq \mathcal{X} \\ i \in \mathcal{X} \setminus \mathcal{T}}}
\Big( f(i \mid \mathcal{T}) - f(i \mid \mathcal{S}) \Big).
\end{equation}

\end{definition}

\begin{remark}
    When leveraging weak-submodularity in subset selection problems where one typically operates under a cardinality budget, e.g., selecting a $k$-sized subset, it suffices to characterize the weak-submodularity constant in \eqref{eq:submod_const} over sets of size less than $k$  (i.e., $|\mathcal{T}| < k$), since these are the only sets encountered during the optimization process~\cite{khanna2017scalable}.
\end{remark}

For each sensor $s_q \in \mathcal{Q}, \hspace{2pt} q = 1, \hdots, |\mathcal{Q}|$, we define 
\begin{equation}
\label{eq:lambda_bound}
   \displaystyle\xi({s_q}) \triangleq \max_{\theta_k \in \Theta} \max_{j \in J} \left(-V^q_{kj} \times \log(1-p_{detect,q})\right). 
\end{equation}

In other words, $\xi(s_q)$ quantifies the maximum absolute contribution that sensor $s_q$ can make to the game’s utility, taken over all feasible orientations $\theta_k \in \Theta$ and all possible intruder paths $j \in J$. Thus, $\xi(s_q)$ provides an upper bound on the contribution of $s_q$ to any entry of the game payoff matrix.

\begin{lemma}
\label{lma:weaksubmod}
Let $v:2^{\mathcal{Q}}\to\mathbb{R}_{\ge 0}$ denote the objective function of the \sense~problem and $B$ denote the placement budget. The function $v$ has the following properties. 

\textit{(i) Monotonicity --} For all $\cals \subseteq \cal{T}\subseteq\mathcal{Q}$,
\[
v(\cals)\;\le\; v(\cal{T}).
\]

\textit{(ii) Weak-submodularity --}
For all $\cals \subseteq \cal{T}\subseteq\mathcal{Q}$ with $|\mathcal{T}|\le B$ and all $i\in\mathcal{Q}\setminus \mathcal{T}$,  $v$ is additive weak-submodular with constant 
\begin{align}
\label{eq:submod_const_bound}
\epsilon_{v,B}\;&\le\; 2 
\sum_{s \in \mathcal{S}^B}\; \xi(s),
\end{align}
where $\cals^B$ denotes the subset of $B$ sensors with the largest sum of the 
$\xi(s)$ values.
\end{lemma}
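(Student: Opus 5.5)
The plan is to work directly with the max–min characterization $v(\cals)=\max_{x\in X}\min_{y\in Y}x^\top A^{\cals}y$. By Theorem~\ref{thm:dist_best_response} and the additive structure of \eqref{eq:log_p_miss}, for a product strategy $x=\bigotimes_{s_q\in\cals}\sigma^q$ we have
\begin{equation*}
x^\top A^{\cals} y=\sum_{s_q\in\cals}\sigma^{q\top}A^q y .
\end{equation*}
A single sensor's sub-game matrix $A^q$ depends only on that sensor, and every entry of $A^q$ lies in $[0,\xi(s_q)]$ by \eqref{eq:lambda_bound}.

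\emph{Monotonicity.} Let $\cals\subseteq\mathcal{T}$ and let $x^\star_{\cals}$ be an optimal defender strategy for $A^{\cals}$. Extend it to $\mathcal{T}$ by taking the product with an arbitrary distribution (say, uniform) over the orientations of the sensors in $\mathcal{T}\setminus\cals$. For every path $j$, the payoff of this extended strategy equals the payoff of $x^\star_{\cals}$ plus the nonnegative terms $\sum_{s_q\in\mathcal{T}\setminus\cals}\sigma^{q\top}A^q e_j\ge 0$, because $-\log(1-p_{detect,q})>0$ and $V^q\ge 0$. Hence $\min_y$ of the extended payoff is at least $v(\cals)$, which gives $v(\mathcal{T})\ge v(\cals)$.

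\emph{Weak submodularity.} First obtain two-sided bounds on a marginal gain. For any $\mathcal{R}$ and $i\notin\mathcal{R}$, the entries of $A^{\mathcal{R}\cup\{i\}}$ exceed the corresponding entries of $A^{\mathcal{R}}$ by a quantity in $[0,\xi(i)]$ for every pure defender strategy. The max–min value is $1$-Lipschitz under entrywise perturbations in the sup norm, and adding $i$ can only help, so
\[
0\le v(i\mid\mathcal{R})\le \xi(i).
\]
To make this precise, I will compare against any defender strategy on $\mathcal{R}\cup\{i\}$ and marginalize out sensor $i$, which loses at most $\xi(i)$ per path. Then
\[
v(i\mid\mathcal{T})-v(i\mid\cals)\le \xi(i)\le \sum_{s\in\cals^B}\xi(s),
\]
which is within the claimed $2\sum_{s\in\cals^B}\xi(s)$. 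This is the more direct route. As a fallback in the style the constant suggests, I will bound $v(\mathcal{T}\cup\{i\})\le\sum_{s\in\mathcal{T}\cup\{i\}}\xi(s)$ and $v(\cals)\ge 0$, and write
\[
v(i\mid\mathcal{T})-v(i\mid\cals)=\bigl[v(\mathcal{T}\cup\{i\})-v(\cals\cup\{i\})\bigr]-\bigl[v(\mathcal{T})-v(\cals)\bigr].
\]
Each bracket is nonnegative by monotonicity and bounded by a sum of at most $B$ values of $\xi$, since $|\mathcal{T}|\le B$ and the restricted-set convention of the preceding remark applies. Because $\cals^B$ maximizes the sum of $B$ values of $\xi$, this yields the factor $2$.

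The main obstacle is the upper bound $v(i\mid\mathcal{R})\le\xi(i)$, that is, showing the game value cannot jump by more than the maximum entry contributed by sensor $i$. My approach is as follows. Take an optimal $x$ for $A^{\mathcal{R}\cup\{i\}}$ and let $x'$ be its marginal on $\mathcal{R}$. For every $y$,
\[
x^\top A^{\mathcal{R}\cup\{i\}}y\le x'^\top A^{\mathcal{R}}y+\xi(i),
\]
since the contribution of sensor $i$ is an average of entries of $A^i$, each at most $\xi(i)$. Taking $\min_y$ and then $\max$ over $x'$ gives the bound. This also settles the fallback bracket bounds, by telescoping over the elements of $\mathcal{T}\setminus\cals$.
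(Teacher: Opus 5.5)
Your proposal is correct, and your direct route differs from the paper's in a way that actually sharpens the constant.

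\textbf{The paper's route.} It first embeds every placement's game into the common space $\Theta^{\mathcal V}$, so that adding sensors is literally matrix addition, $\mathbf A^{\mathcal T}=\mathbf A^{\mathcal S}+\sum_{s\in\mathcal T\setminus\mathcal S}\mathbf A_s$. It then applies the sup-norm Lipschitz bound on the game value (its Identity~4) to the grouping $\bigl[v(\mathbf A^{\mathcal T}+\mathbf A_i)-v(\mathbf A^{\mathcal S}+\mathbf A_i)\bigr]+\bigl[v(\mathbf A^{\mathcal S})-v(\mathbf A^{\mathcal T})\bigr]$. Each bracket is bounded by $\|\mathbf A^{\mathcal T}-\mathbf A^{\mathcal S}\|_\infty\le\sum_{s\in\mathcal T\setminus\mathcal S}\xi(s)$, and the worst case $\mathcal S=\emptyset$, $\mathcal T=\mathcal S^B$ is taken.

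\textbf{Your route.} You keep the two marginal gains as they are and prove $0\le v(i\mid\mathcal R)\le\xi(i)$ for every $\mathcal R$ by marginalizing out sensor $i$. This is the paper's Identity~1 done inline. It also removes the need for the common embedding: a black-box Lipschitz bound requires matrices of equal size, which $A^{\mathcal R}$ and $A^{\mathcal R\cup\{i\}}$ are not.

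\textbf{What each buys.} Your argument gives $\epsilon_{v,B}\le\max_i\xi(i)$, independent of $B$. This is at most $\sum_{s\in\mathcal S^B}\xi(s)$, since $\xi\ge 0$ and $\mathcal S^B$ contains a maximizer of $\xi$. Plugged into Theorem~\ref{thm:greedy}, it would replace the additive loss $2(B-1)\sum_{s\in\mathcal S^B}\xi(s)$ with $(B-1)\max_i\xi(i)$. The paper's route loses for two reasons. It never uses the sign $v(\mathcal S)-v(\mathcal T)\le 0$, which monotonicity already provides. It also charges the perturbation from all of $\mathcal T\setminus\mathcal S$ rather than from $i$ alone. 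The paper's embedding does buy a uniform framework in which a generic perturbation lemma applies off the shelf.

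\textbf{A small slip in your fallback.} The second bracket there is nonnegative and enters with a minus sign, so that argument gives $\sum_{s\in\mathcal T\setminus\mathcal S}\xi(s)$, i.e.\ factor $1$, not $2$. The factor $2$ appears only if you bound both brackets in absolute value, as the paper does.

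\textbf{Monotonicity.} Your argument, extending an optimal strategy by an independent distribution on $\mathcal T\setminus\mathcal S$, is essentially the paper's Lift-plus-Identity~3 argument.
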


We now present Algorithm~\ref{alg:sense}, a greedy procedure for approximating the solution to the \sense~problem. The algorithm initializes with an empty  set, $\cals^{g} = \emptyset$, and iteratively adds the sensor that yields the largest marginal improvement in the utility. At each iteration (Line~3), the value $v(\cdot)$ is evaluated by solving the corresponding \orient~problem using Algorithm~\ref{alg:orient}.\footnote{We assume that Algorithm~\ref{alg:orient} returns the optimal solution to the \orient~problem; in practice, this is done by choosing the approximation parameter $\epsilon_T$ sufficiently close to zero (e.g., $\epsilon_T = 1e-6$).} The procedure terminates once the placement budget is reached, i.e., when $|\cals^{g}| = B$.

\begin{algorithm}[h!]
    \caption{Greedy Algorithm for \sense \label{alg:sense}}
    \begin{algorithmic}[1]
      \Require   Candidate sensors $\mathcal{Q}$, Placement budget $B$
       \Ensure Sensor placement $\cals^g \subseteq \mathcal{Q}$ such that $|\cals^g| \leq B$
            \State $\cals^g \gets \emptyset$
            \While {$ |\cals^g| \le B$}
                \State Identify the sensor with largest utility gain: \begin{equation}
                \label{eq:greedy_step}
                     i^* \gets \argmax_{i \in \mathcal{Q} \setminus \cals^g} \hspace{2pt} \left(v(\cals^g \cup \{i\}) - v(\cals^g)\right)
                \end{equation}
                \State $\cals^g \gets \cals^g \cup \{i^*\}$
            \EndWhile\\
    \Return $\cals^g$ 
    \end{algorithmic}
\end{algorithm}

We have the following guarantees for Algorithm~\ref{alg:sense}, which follows from the near-optimality bounds presented in~\cite{fisher2009analysis,calinescu2011maximizing,hashemi2019submodular}

\begin{theorem}
\label{thm:greedy}
    For a given set of candidate sensors $\mathcal{Q}$ and placement budget $B \in \mathbb{Z}_+$, let $\sopt$ denote the optimal solution (sensor placement) to the \sense~problem (Problem \ref{prob:sense_prob}), and let $\cals^g$ denote the sensor placement returned by Algorithm~\ref{alg:sense}. We have
    \begin{equation}
        \label{eq:greedy_bound}
        v(\cals^g) \ge \ \frac{1}{2} \hspace{2pt} \left( v(\sopt) - (B-1) \epsilon_{v,B} \right),
    \end{equation}
    where $\epsilon_{v,B}$ is the weak-submodularity constant as defined in~\eqref{eq:submod_const_bound}.
\end{theorem}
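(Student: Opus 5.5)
We prove the bound with the standard greedy analysis for a monotone function over the intersection of a partition matroid and a uniform matroid, in the style of Fisher, Nemhauser and Wolsey. The only change is that we carry the additive slack $\epsilon_{v,B}$ from Lemma~\ref{lma:weaksubmod} through every point where submodularity would normally be used. Note that the feasible family $\mathcal{I}$ in \eqref{eq:matroid_constr} is exactly the family of independent sets of a single matroid $\mathcal{M}'$: a partition matroid truncated to rank $B$, which is again a matroid. This is the setting in which greedy gives a factor $1/2$ rather than $1-1/e$, matching the constant in \eqref{eq:greedy_bound}. We interpret Algorithm~\ref{alg:sense} as adding only elements that keep $\cals^g\in\mathcal{I}$, and by monotonicity (Lemma~\ref{lma:weaksubmod}(i)) we assume without loss of generality that $|\sopt|=|\cals^g|=B$.

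\textbf{Step 1 (bijection).} Write the greedy elements in order as $g_1,\dots,g_B$ and let $\cals^g_{t}=\{g_1,\dots,g_t\}$. By the matroid base-exchange property, or its augmentation-based ordering lemma, there is an ordering $o_1,\dots,o_B$ of $\sopt$ such that $\cals^g_{t-1}\cup\{o_t\}\in\mathcal{I}$ for every $t$. So $o_t$ was a feasible candidate at iteration $t$, and the greedy rule \eqref{eq:greedy_step} gives
\[
v(g_t\mid \cals^g_{t-1})\ \ge\ v(o_t\mid \cals^g_{t-1}).
\]
If $o_t\in\cals^g_{t-1}$, its marginal gain is zero and the inequality holds trivially.

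\textbf{Step 2 (telescoping over $\sopt$).} By monotonicity, $v(\sopt)\le v(\sopt\cup\cals^g)$. Telescope this by adding $o_1,\dots,o_B$ to $\cals^g$ one at a time:
\[
v(\sopt\cup\cals^g)=v(\cals^g)+\sum_{t=1}^B v\big(o_t\mid \cals^g\cup\{o_1,\dots,o_{t-1}\}\big).
\]
For each term, compare the large set $\mathcal{T}=\cals^g\cup\{o_1,\dots,o_{t-1}\}$ with the small set $\cals^g_{t-1}\subseteq\mathcal{T}$. Additive weak submodularity then gives
\[
v(o_t\mid\mathcal{T})\le v(o_t\mid\cals^g_{t-1})+\epsilon_{v,B}\le v(g_t\mid\cals^g_{t-1})+\epsilon_{v,B}.
\]
For $t=1$ we have $\cals^g_0=\emptyset$. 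We will try to exploit this, together with the freedom to place the element reached first, so that one term carries no slack. This would give slack $(B-1)\epsilon_{v,B}$ instead of $B\epsilon_{v,B}$. Summing the terms and using $\sum_t v(g_t\mid\cals^g_{t-1})=v(\cals^g)-v(\emptyset)\le v(\cals^g)$ (since $v\ge0$), we get $v(\sopt)\le 2v(\cals^g)+(B-1)\epsilon_{v,B}$, which is \eqref{eq:greedy_bound}.

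\textbf{Main obstacle.} The delicate point is the size condition in Lemma~\ref{lma:weaksubmod}(ii). There $\epsilon_{v,B}$ is certified only for $|\mathcal{T}|\le B$, but the set $\cals^g\cup\{o_1,\dots,o_{t-1}\}$ used in Step 2 can have up to $2B-1$ elements. The first thing to try is to avoid the large union altogether. Apply the weak-submodularity comparison only between the nested greedy prefixes $\cals^g_{t-1}$, which have size at most $B-1$, so the lemma's hypothesis holds. Bound $v(\sopt)-v(\cals^g)$ by chaining $o_t$ onto $\cals^g_{t-1}$ directly, following the proof of \cite{hashemi2019submodular}, whose weak-submodular greedy bound has exactly the $(B-1)\epsilon$ form. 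If that fails, fall back on the fact that the constant in \eqref{eq:submod_const_bound} comes from a per-entry payoff argument that plausibly extends to any $|\mathcal{T}|$ at which the relevant $B$ largest $\xi$ values dominate. Counting the slack terms so that exactly $B-1$ appear is the other step that needs care.
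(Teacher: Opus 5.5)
There is a genuine gap in the slack count of Step~2. Your Step~1 and the Fisher--Nemhauser--Wolsey template are sound. They are also what the paper relies on implicitly: its proof just cites a bound $\frac{p}{p+1}\bigl(f(\sopt)-(k-1)\epsilon_f\bigr)$ and sets $p=1$.

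The problem is that you telescope $\sopt$ onto the full greedy set $\mathcal{S}^g$. Then the term for $o_t$ can only be tied to the greedy rule by passing to $v(o_t\mid\mathcal{S}^g_{t-1})$. Since $\mathcal{S}^g_{t-1}\subsetneq\mathcal{S}^g$ for every $t\le B$, all $B$ comparisons are strictly nested and each costs $\epsilon_{v,B}$; no admissible reordering of $\sopt$ changes this. In particular, the $t=1$ term compares $v(o_1\mid\mathcal{S}^g)$ with $v(o_1\mid\emptyset)$, which is the most separated comparison, not a free one. So $\mathcal{S}^g_0=\emptyset$ buys nothing, and your argument as written gives only $v(\mathcal{S}^g)\ge\frac12\bigl(v(\sopt)-B\,\epsilon_{v,B}\bigr)$.

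The slack-free term in the cardinality proof you are recalling arises because the first optimal element is compared at exactly the set it is added to. That set must be a greedy prefix at which the element was a feasible candidate. Here this means telescoping onto $\mathcal{S}^g_{B-1}$ and adding $o_B$ first:
\[
v(\sopt)\le v\bigl(\sopt\cup\mathcal{S}^g_{B-1}\bigr)
= v(\mathcal{S}^g_{B-1})+v(o_B\mid\mathcal{S}^g_{B-1})
+\sum_{t=1}^{B-1} v\bigl(o_t\mid\mathcal{S}^g_{B-1}\cup\{o_B,o_1,\dots,o_{t-1}\}\bigr).
\]
By Step~1, $\mathcal{S}^g_{B-1}\cup\{o_B\}\in\mathcal{I}$, so the greedy rule gives $v(o_B\mid\mathcal{S}^g_{B-1})\le v(g_B\mid\mathcal{S}^g_{B-1})$ with no slack. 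Each of the other $B-1$ terms is at most $v(g_t\mid\mathcal{S}^g_{t-1})+\epsilon_{v,B}$. Summing, and using monotonicity and $v(\emptyset)=0$,
\[
v(\sopt)\le v(\mathcal{S}^g)+v(\mathcal{S}^g_{B-1})+(B-1)\epsilon_{v,B}\le 2v(\mathcal{S}^g)+(B-1)\epsilon_{v,B}.
\]

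On the size restriction, your first idea cannot work. To bring $v(\sopt)$ into the chain you must evaluate gains of optimal elements on sets containing other optimal elements, and comparisons among greedy prefixes alone never do that. For instance, telescoping $\sopt$ from $\emptyset$ keeps every set below size $B$ but yields only a factor $1/B$.

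Your fallback is the right one, and it is easy to make precise. The proof of Lemma~\ref{lma:weaksubmod} actually shows $v(i\mid\mathcal{T})-v(i\mid\mathcal{S})\le 2\sum_{s\in\mathcal{T}\setminus\mathcal{S}}\xi(s)$ for $\mathcal{S}\subseteq\mathcal{T}$, with no condition on $|\mathcal{T}|$. In the chain above, $\mathcal{T}\setminus\mathcal{S}^g_{t-1}\subseteq\{g_t,\dots,g_{B-1},o_B,o_1,\dots,o_{t-1}\}$ has at most $B$ elements. So every slack is at most $2\sum_{s\in\mathcal{S}^B}\xi(s)$, and this proves the theorem with $\epsilon_{v,B}$ read as the right-hand side of \eqref{eq:submod_const_bound}.

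If $\epsilon_{v,B}$ is instead read as the exact constant restricted to $|\mathcal{T}|\le B$, the sets used here (up to $2B-2$ elements) are not covered. The paper's citation-based proof does not address this either.
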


\begin{remark}
   The multiplicative factor of $1/2$ in the near-optimality bound of Theorem~\ref{thm:greedy} can be improved to $(1 - 1/e)$ by employing a continuous version of the greedy algorithm over the multi-linear extension of the objective combined with pipage (or swap) rounding to obtain the discrete solution, originally developed by ~\cite{calinescu2011maximizing}. We resort to the vanilla-greedy scheme (Algorithm~\ref{alg:sense}) for convenience and computational simplicity. Moreover, we empirically demonstrate that even the simple greedy scheme (Algorithm~\ref{alg:sense})  achieves utilities that are very close to the optimal value. Additionally, the $(B-1) \epsilon_{v,B}$ term is the suboptimality arising from the weak-submodularity of the objective. We note that the derived bound on $ \epsilon_{v,B}$ in \eqref{eq:submod_const_bound} is inherently conservative, as it is obtained under worst-case assumptions. In practice, the effective value of $ \epsilon_{v,B}$ can be typically much smaller (see Figure~\ref{fig:subopt_bound_compare_sense}) and depends on the specific problem instance, particularly the structural properties of the candidate sensor placement locations and intruder paths.
\end{remark}

We now state the following corollary, which directly follows from Theorem~\ref{thm:greedy} in the special case where all sensors are identical, i.e., they have the same $p_{detect}$, orientation action space $\Theta$,  and the same coverage/range (i.e., same $V^q_{ij}$'s).
\begin{corollary}
\label{coro:same_sensors}
   If all sensors are identical,  Algorithm~\ref{alg:sense} has the following guarantees.
\[
v(\cals^{g}) \;\ge\; \left(1 - \frac{1}{e}\right) \left( v(\sopt) - (B-1)\,\epsilon_{v,B}\right).
\] 
\end{corollary}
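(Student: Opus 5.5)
The plan is to show that when all sensors are identical, the feasible family $\mathcal{I}$ in \eqref{eq:matroid_constr} collapses to a single uniform matroid on placements. Then the greedy analysis for a cardinality constraint applies, giving the factor $(1-1/e)$ in place of the $1/2$ factor for matroid intersection used in Theorem~\ref{thm:greedy}.

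\textbf{Step 1: reduce to a cardinality-constrained problem.} Identical sensors means the same $p_{detect}$, the same $\Theta$ and the same coverage sets. Then the payoff matrix $A^{\cals}$, and hence $v(\cals)$, depends only on the multiset of locations occupied by $\cals$. It does not depend on which sensor label sits where. Any sensor-to-location assignment respecting $\mathcal{I}_{loc}$ can therefore be relabeled without changing $v$. Concretely, any set of at most $B$ location choices (with $B\le p$) can be realized by giving distinct sensors to distinct chosen slots. The partition constraint is thus never binding on the objective, and the problem is equivalent to maximizing $v$ over sets of size at most $B$ under a uniform matroid. Greedy on $\mathcal{Q}$ behaves identically to greedy on this reduced ground set, because the relabeling preserves marginal gains; if two copies $(s,\ell)$ and $(s',\ell)$ could both be chosen, this is just another element of the reduced ground set. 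The same relabeling leaves $v(\sopt)$ unchanged.

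\textbf{Step 2: standard greedy recursion under additive weak submodularity.} I would let $\cals_t$ be the greedy set after $t$ steps. By Lemma~\ref{lma:weaksubmod}(i), monotonicity gives
$$v(\sopt)\le v(\cals_t\cup\sopt)=v(\cals_t)+\sum_{k} v(o_k\mid \cals_t\cup\{o_1,\dots,o_{k-1}\}),$$
where $o_1,\dots,o_B$ are the elements of $\sopt$. Applying Lemma~\ref{lma:weaksubmod}(ii) with $\cals_t\subseteq \cals_t\cup\{o_1,\dots,o_{k-1}\}$ bounds each term by $v(o_k\mid\cals_t)+\epsilon_{v,B}$. This is valid only while $|\cals_t\cup\{o_1,\ldots,o_{k-1}\}|$ stays within the range where the constant is controlled, which I will need to check; if it fails, I would enlarge the constant's domain via the same $\xi$-based bound. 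The greedy choice then gives $v(o_k\mid\cals_t)\le v(\cals_{t+1})-v(\cals_t)$, so
$$v(\sopt)-v(\cals_t)\le B\,[v(\cals_{t+1})-v(\cals_t)]+(\text{error}).$$

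\textbf{Step 3: unroll the recursion.} Define $\delta_t=v(\sopt)-v(\cals_t)-c$ for an appropriate shift $c$. Then $\delta_{t+1}\le(1-1/B)\delta_t$, so $\delta_B\le e^{-1}\delta_0$. With $v(\emptyset)=0$ (the empty placement gives zero payoff), this yields
$$v(\cals^g)\ge(1-1/e)\bigl(v(\sopt)-c\bigr).$$

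\textbf{Main obstacle.} The hard part is getting exactly the $(B-1)\epsilon_{v,B}$ error term rather than $B\epsilon_{v,B}$. The trick is to charge the first optimal element $o_1$ with no error, since its term uses $\cals_t$ itself and so contributes zero error. Only the remaining $B-1$ elements pay $\epsilon_{v,B}$, so the per-step error is $(B-1)\epsilon_{v,B}$. Dividing by $B$ in the recursion and summing the geometric series gives $c=(B-1)\epsilon_{v,B}$, matching the hashemi2019submodular-style bound. I would cite that analysis for the bookkeeping rather than redo it.
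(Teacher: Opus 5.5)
Your proposal is correct and follows the paper's route. Homogeneity lets you strip sensor labels, so the feasible family \eqref{eq:matroid_constr} collapses to a pure cardinality constraint. The paper then simply cites Proposition~2 of \cite{hashemi2019submodular}, whereas you re-derive that bound, correctly charging no error to $o_1$ so that the shift is $(B-1)\epsilon_{v,B}$ and the factor is $(1-1/e)$.

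The size issue you flag does occur, since $\cals_t\cup\{o_1,\dots,o_{k-1}\}$ can have up to $2B-2$ elements. Your proposed fix works, provided $\epsilon_{v,B}$ is read as the $\xi$-based bound in \eqref{eq:submod_const_bound}. The bound in the proof of Lemma~\ref{lma:weaksubmod} depends only on $\mathcal T\setminus\mathcal S\subseteq\{o_1,\dots,o_{k-1}\}$, which has at most $B-1$ elements. Hence $2\sum_{s\in\mathcal T\setminus\mathcal S}\xi(s)\le 2\sum_{s\in\cals^B}\xi(s)$ still holds for every pair that arises in your recursion.
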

\section{Experiments}

In this section, we present an extensive set of simulation studies designed to assess the performance of the proposed framework in an intrusion detection scenario.

\subsection{Experimental Setup}
We begin by describing the experimental setup in detail, outlining the underlying environment, the modeling assumptions, and strategy spaces of both the defender and the intruder.

\noindent\textbf{Environment --}
We consider the nine-room grid world environment~\cite{ghasemi2021multiple}, illustrated in Figure~\ref{fig:env_sensors}. The environment consists of a $19 \times 19$ grid of cells, which can be equivalently represented as a graph $G(V,E)$ with $|V| = 361$ nodes, each corresponding to a cell in the grid. The edges in $E$ capture adjacency relationships between the cells/nodes. Cells marked in black denote obstacles and are impassable. Sensors can only be installed at the candidate locations highlighted in red (and indexed in Figure~\ref{fig:env_sensors}). We denote the set of candidate placement nodes (or cells) by $\mathcal{L}$.

\begin{figure}[!h]
    \centering
\includegraphics[width=0.35\linewidth]{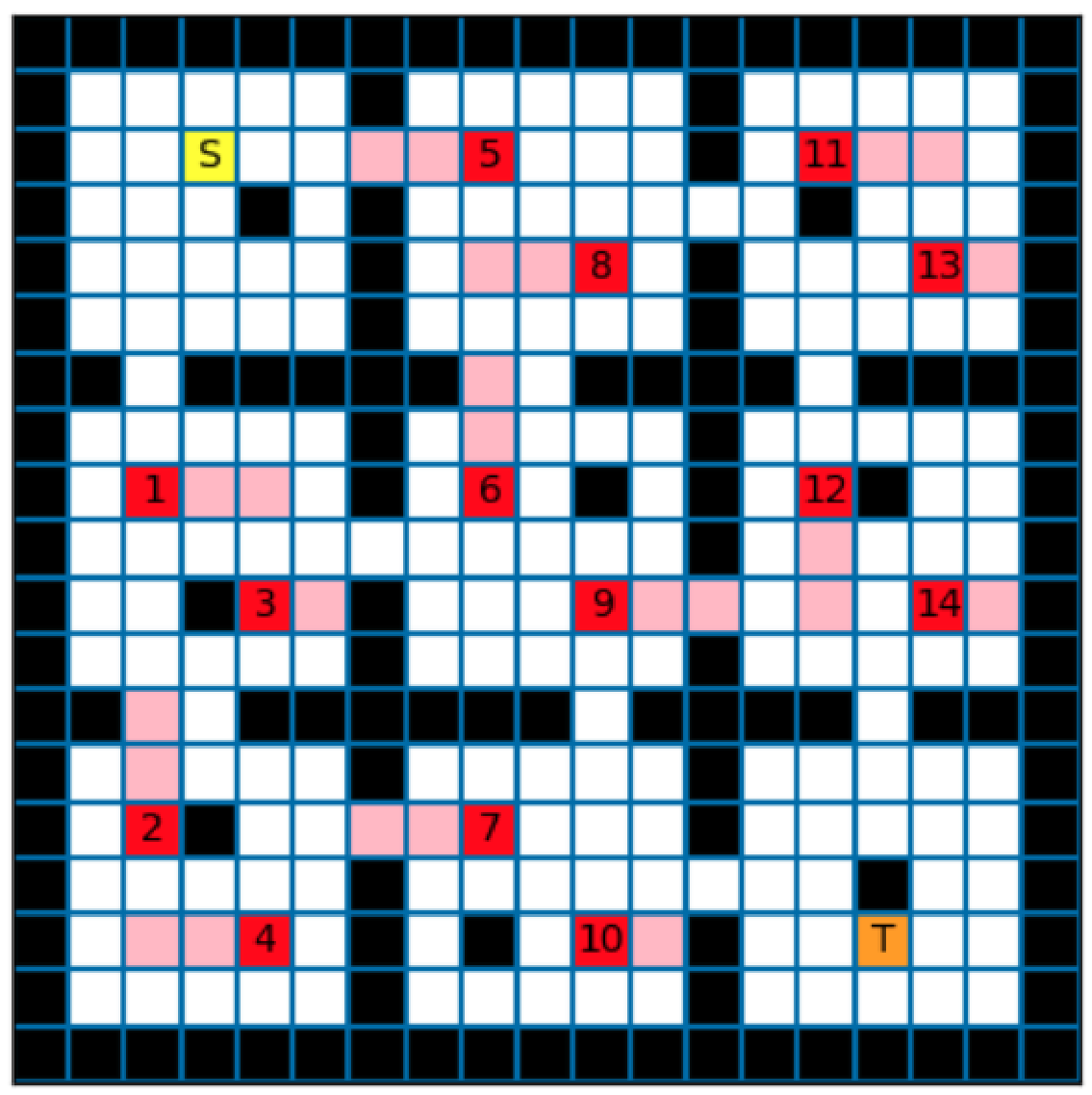}
    \caption{The nine-room grid world environment --  Candidate sensor placement locations are indexed from $1$ to $14$. A representative pure orientation-scheduling strategy of the defender, shown for all $14$ sensors, is illustrated. Each sensor's field-of-view consists of $3$ adjacent cells (including the cell in which it is placed) along the direction of its orientation.}
    \label{fig:env_sensors}
\end{figure}

\noindent\textbf{Defender -- }
We denote the ground set of sensors available to the defender as $\mathcal{V}$. The set of sensor-location pairs available to the defender (to determine the optimal placement) is thus $\mathcal{Q} = \mathcal{V} \times \mathcal{L}$.  Each sensor in $\mathcal{V}$ has a discrete orientation set with cardinality $\abs{\Theta}=4$, corresponding to the four cardinal directions -- $\{\scenario{East}, \scenario{North}, \scenario{West}, \scenario{South}\}$. For a selected sensor subset $\mathcal{V'} \subseteq \mathcal{V}$, the defender’s (pure) orientation strategy space consists of all joint orientation assignments and therefore has size $4^{\abs{\mathcal{V'}}}$.  For each sensor $s_q \in \mathcal{V}$, we independently sample its detection probability $p_{\text{detect}}^{q}$ uniformly at random from the interval $(0.2, 0.8)$ (unless specified otherwise)  and fix (freeze) this value for the entire problem instance. This models heterogeneity in sensing quality across candidate sensors while ensuring that each instance remains internally consistent.
We treat both the size of the ground set $\abs{\mathcal{V}}$ and the sampled detection probabilities $\{{p_{\text{detect}}^{q}} \mid {s_q \in \mathcal{V}}\}$ as instance-level parameters. Varying $\abs{\mathcal{V}}$ (which controls combinatorial complexity) and resampling detection probabilities (which affects payoff structure), allows us to generate multiple problem instances with differing strategic complexity and heterogeneous utilities.   

\noindent\textbf{Intruder --}
We assume the intruder originates at the source node $S$ (marked in yellow) and aims to reach the target node $T$ (marked in orange) (see Figure~\ref{fig:intruder_paths}). To model strategic diversity, we explicitly enumerate a set $J$ of $\abs{J} = 50$ feasible source-to-target paths. These paths are constructed to be structurally diverse, capturing multiple ways of reaching $T$ by traversing different rooms and corridors of the nine-room environment, while respecting obstacle constraints (see~\cite{ghasemi2021multiple} for more details on techniques for enumerating such diverse paths). 

\begin{figure}[!h]
    \centering
    \includegraphics[width=0.6\linewidth]{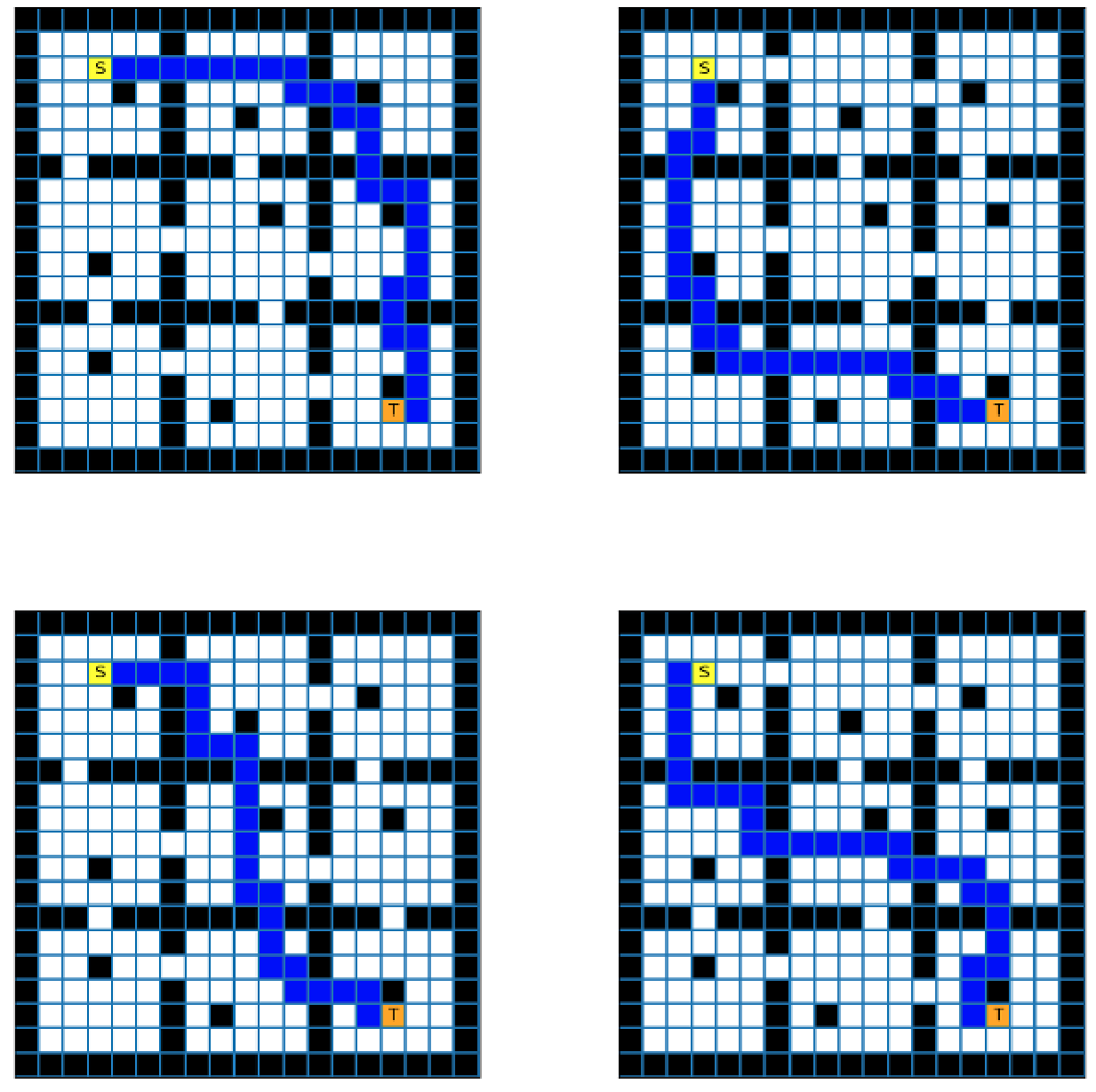}
    \caption{Sample intruder paths (pure strategies) in the nine-room grid world environment. All paths originate at node $S$ (marked in yellow) and terminate at $T$ (marked in orange).}
    \label{fig:intruder_paths}
\end{figure}

% \noindent All simulations are run on a 2.6 GHz 6-Core Intel i7 machine. 

\subsection{Convergence and Scalability of Algorithm~\ref{alg:orient}}

We evaluate the convergence behavior and scalability of Algorithm~\ref{alg:orient} for solving the \orient~problem to a prescribed approximation accuracy. 

For the convergence study, we generate instances of the \orient~problem with $|\mathcal{V}| = 5$. For each sensor $s_q \in \mathcal{V}$, the detection probability $p_{\text{detect}}^{q}$ is sampled independently and uniformly at random from the interval $(0.5, 0.8)$. to generate 20 game instances with slightly perturbed payoff matrices\footnote{We normalize the payoff matrix entries to be in $(0,1)$. Note that normalization preserves NE strategies and scales the game value to within $(0,1)$~\cite{bhargav2025sensor}.}. Algorithm~\ref{alg:orient} is executed on each instance. Figure~\ref{fig:game_nash} reports the average defender and intruder payoff trajectories across the 20 instances, with shaded regions indicating one standard deviation around the mean. We observe that the payoffs converge rapidly toward the equilibrium value, reaching negligible deviation within a few hundred iterations.

For scalability studies, we assess the runtime performance of Algorithm~\ref{alg:orient} as the problem size increases, while maintaining a prescribed equilibrium approximation quality against the following baselines:
\begin{itemize}
\item Solving the standard linear programming (LP) formulation of the game~\cite{nehme2009two};
\item The Weighted Majority Algorithm (WMA) proposed in~\cite{freund1996game}.
\end{itemize}
To generate instances of increasing complexity, we vary the number of sensors $\abs{\mathcal{V}} \in \{2,3,4,5,6,7,8\}$, which induces exponential growth in the defender’s strategy space, while keeping the intruder’s strategy space fixed at $|J| = 50$. We set the target equilibrium accuracy to $\epsilon_T = 0.001$, i.e., we compute an $\epsilon$-NE with $\epsilon = 0.001$. For each instance, the required number of iterations $T$ to achieve an $\epsilon_T$-NE is estimated using~\eqref{eq:nashgap}, with $\Tilde{L} = T$.
Each game instance is solved using: (i) the standard LP formulation~\cite{nehme2009two} implemented via the Gurobi Optimization Solver, (ii) WMA~\cite{freund1996game}, and (iii) Algorithm~\ref{alg:orient}. Due to solver limitations under the Gurobi academic license, we were unable to solve game instances with strategy spaces exceeding $2000$ using the LP formulation. The compare the corresponding running times in Figure~\ref{fig:lp_wma_des_runtime}. We observe that both WMA and Algorithm~\ref{alg:orient} attain the prescribed $0.001$-approximate equilibrium substantially faster than the LP formulation. Furthermore, as the defender’s strategy space grows, Algorithm~\ref{alg:orient} demonstrates superior scalability compared to WMA, with increasingly pronounced runtime gains.

% \begin{table}[!h]
%     \centering
%     \begin{tabular}{ccccc}
%     \hline
% $|\mathcal{S}|$ & Strategy Spaces & LP & WMA & \textbf{Algorithm~\ref{alg:orient}} \\
%        \hline 
%        \hline
%        $2$ & $16 \times 50$ & 3.22  & 2.73  & 1.72 \\
%          $3$ &  $64 \times 50$ & 7.90  & 3.04  & 2.35 \\
%         $4$ &  $256 \times 50$ & 24.88  & 6.09  & 3.77 \\
%         $5$ &  $1024 \times 50$ & 75.72  & 12.15  & 5.34 \\
%        $6$ &   $4096 \times 50$ & - & 35.51  & 6.91 \\
%       $7$ &    $16,384 \times 50$ & - & 107.65  & 8.21 \\
%         $8$ &  $65,536 \times 50 $ & - & 365.43  & 9.35 \\
%          \hline
%     \end{tabular}
%     \caption{Comparison of running times (in secs) of Distributed Equillibirum Seeking Algorithm (Algorithm \ref{alg:orient}) with standard Weighted Majority algorithm (WMA)~\cite{freund1996game} and Linear Program}
%     \label{tab:table1}
% \end{table}
\begin{figure*}[!ht]
    \centering
    \begin{subfigure}{0.3\textwidth}
    \centering        \includegraphics[width=155pt]{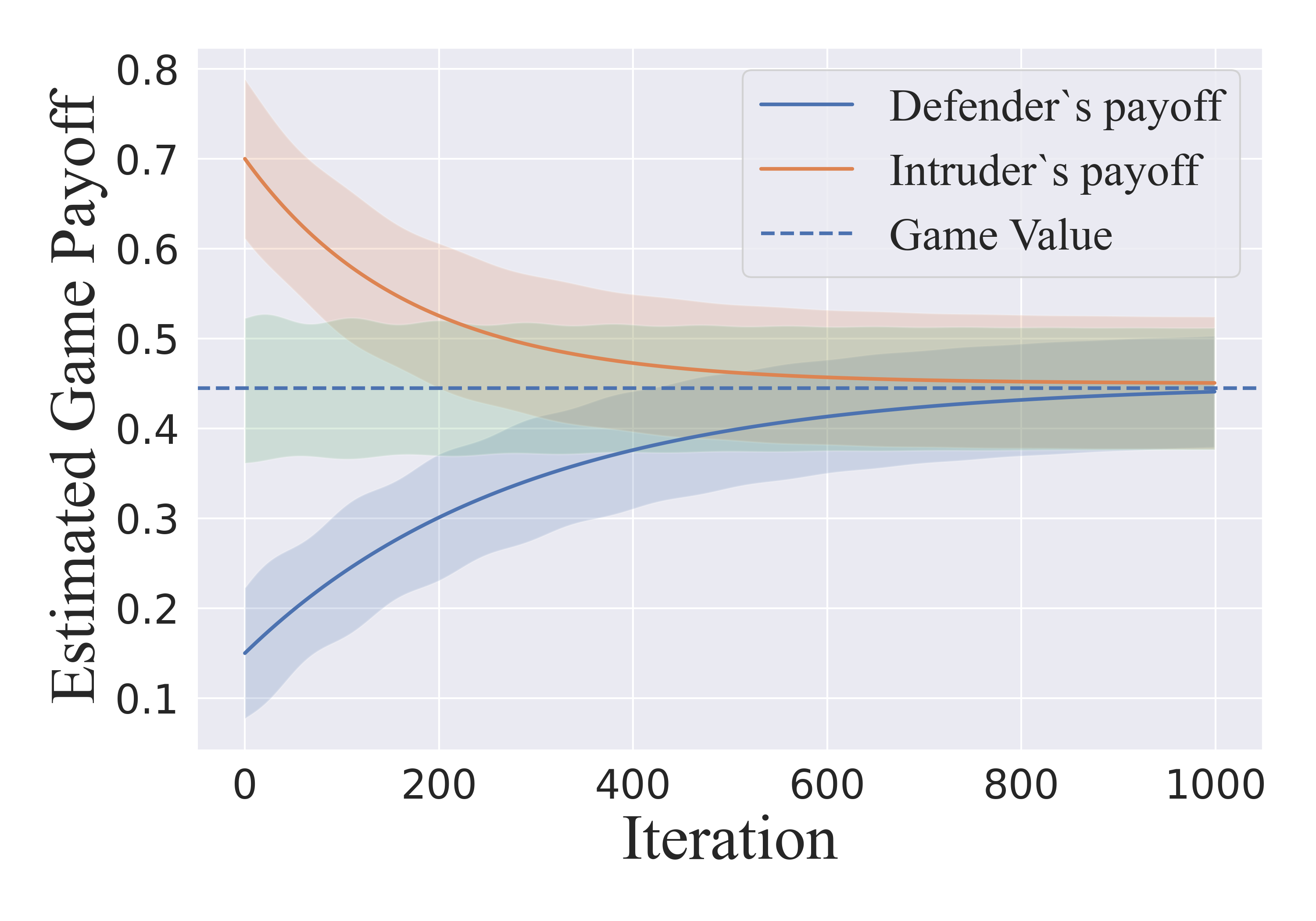}
        \caption{Convergence of defender's and intruder's value estimates to the Nash equilibrium value of the scheduling game}
        \label{fig:game_nash}
    \end{subfigure}
    \hfill
        \begin{subfigure}{0.3\textwidth}
    \centering        \includegraphics[width=160pt]{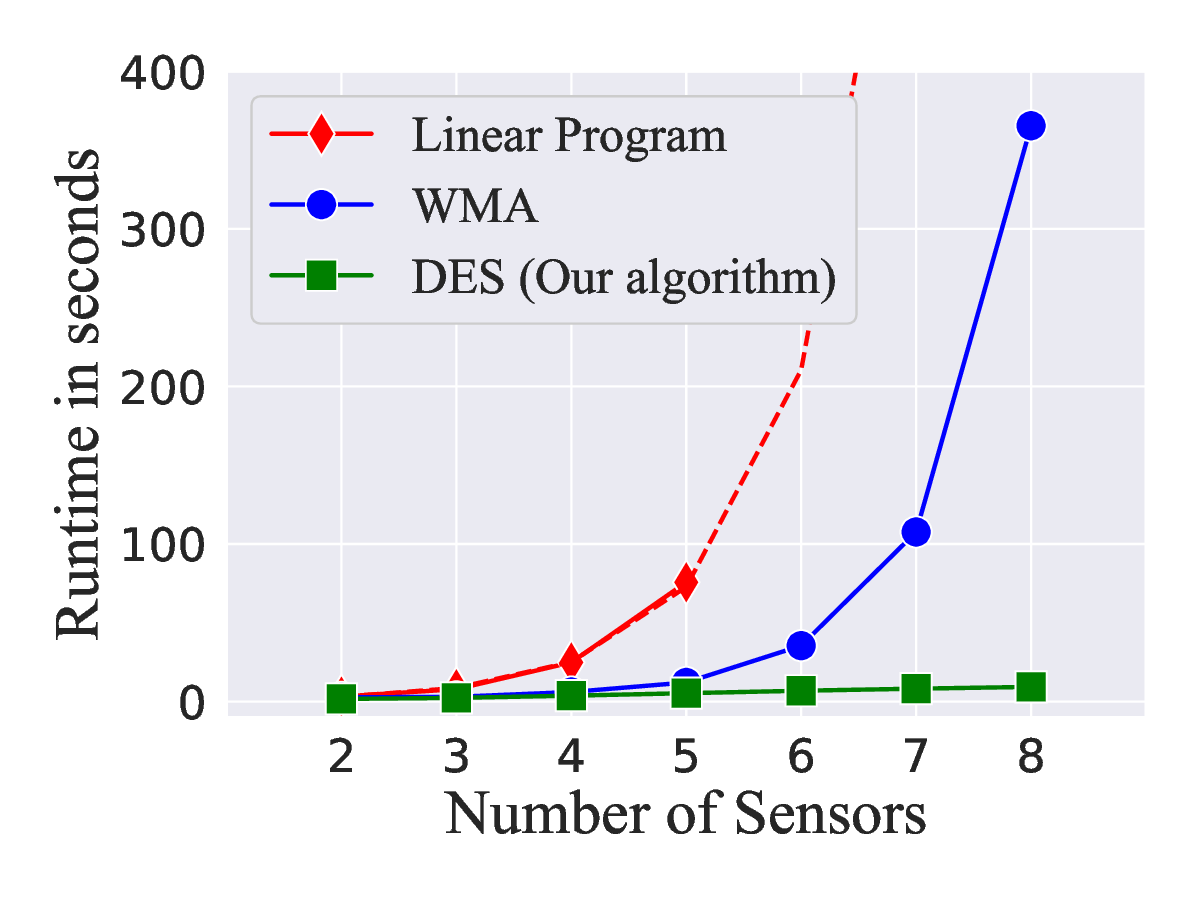}
        \caption{Runtime comparision of DES with LP and WMA}
        \label{fig:lp_wma_des_runtime}
    \end{subfigure}
    \hfill
    \begin{subfigure}{0.3\textwidth}
    \centering        \includegraphics[width=145pt]{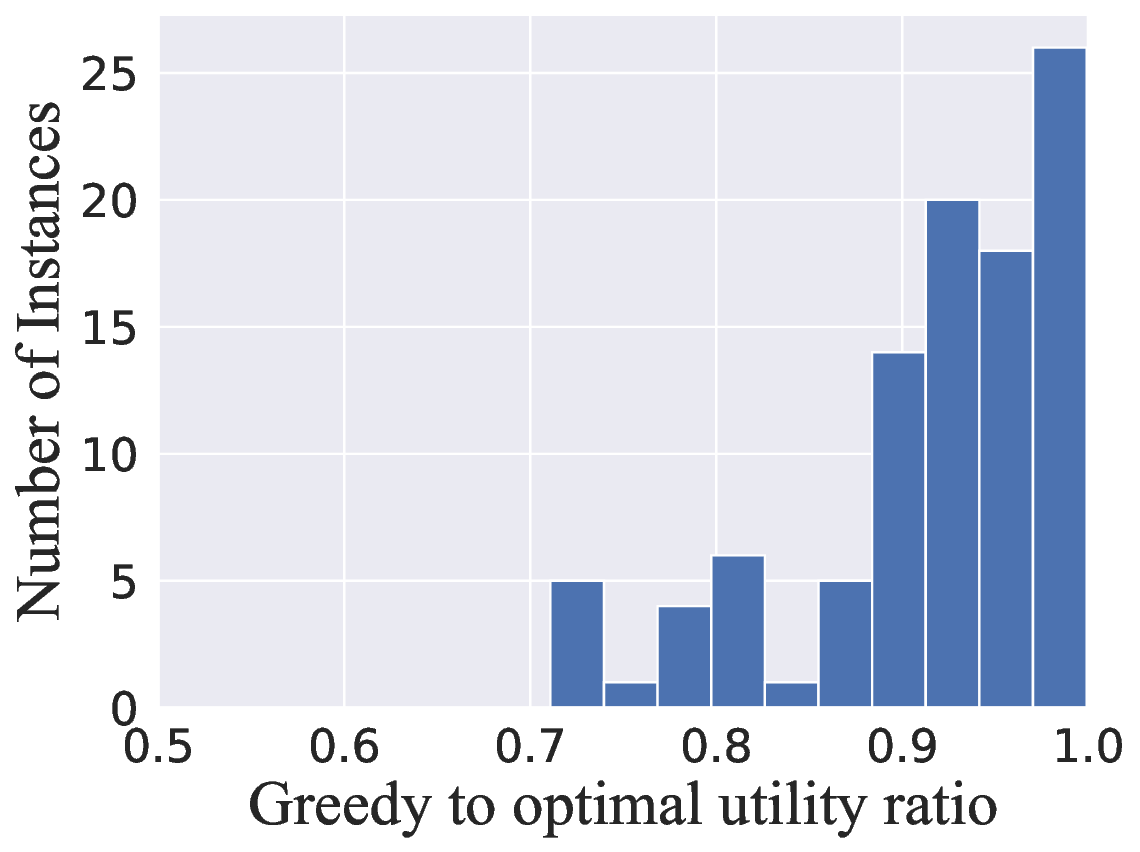}
        \caption{Relative performance ratio $\rho$ of the utility obtained by greedy placement to that of the optimal}
        \label{fig:greedy_opt_sense}
    \end{subfigure}
    \medskip
       \begin{subfigure}{0.3\textwidth}
    \centering
        \includegraphics[width=160pt]{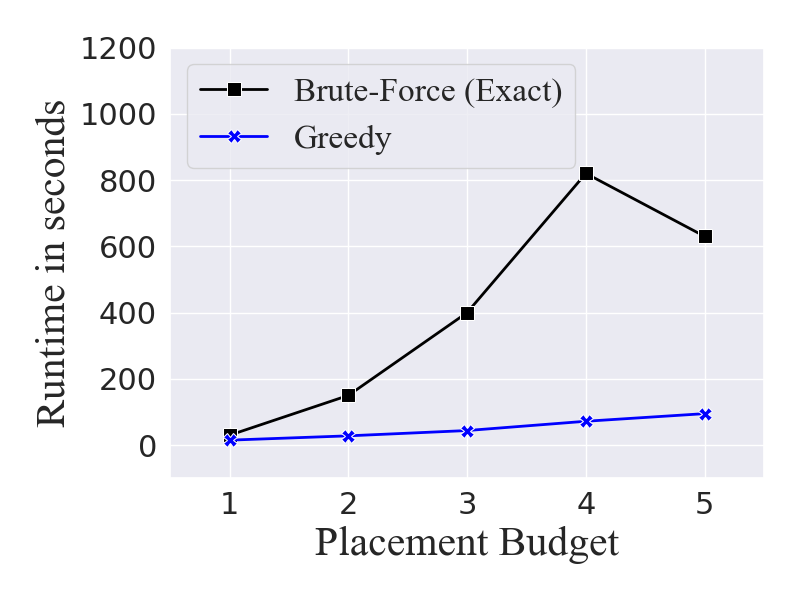}
        \caption{Runtime comparison of Algorithm~\ref{alg:sense} (greedy) and the brute-force (exact) oracle for varying placement budget $B$}
\label{fig:runtime_compare_sense}
    \end{subfigure}
    \hfill
            \begin{subfigure}{0.3\textwidth}
    \centering        \includegraphics[width=160pt]{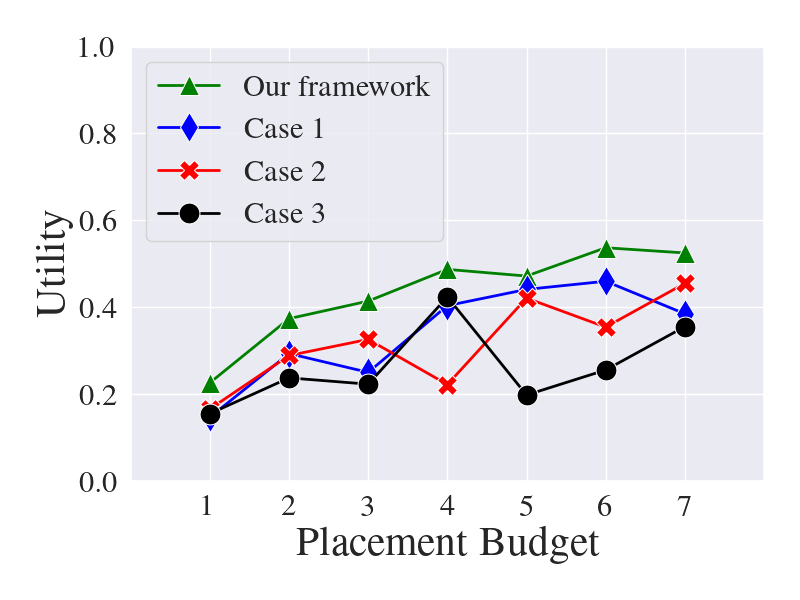}
        \caption{Comparison of utility of the proposed joint optimization framework with baseline cases}
        \label{fig:baseline_compare_joint}
    \end{subfigure}
    \hfill
    \begin{subfigure}{0.3\textwidth}
    \centering
        \includegraphics[width=155pt]{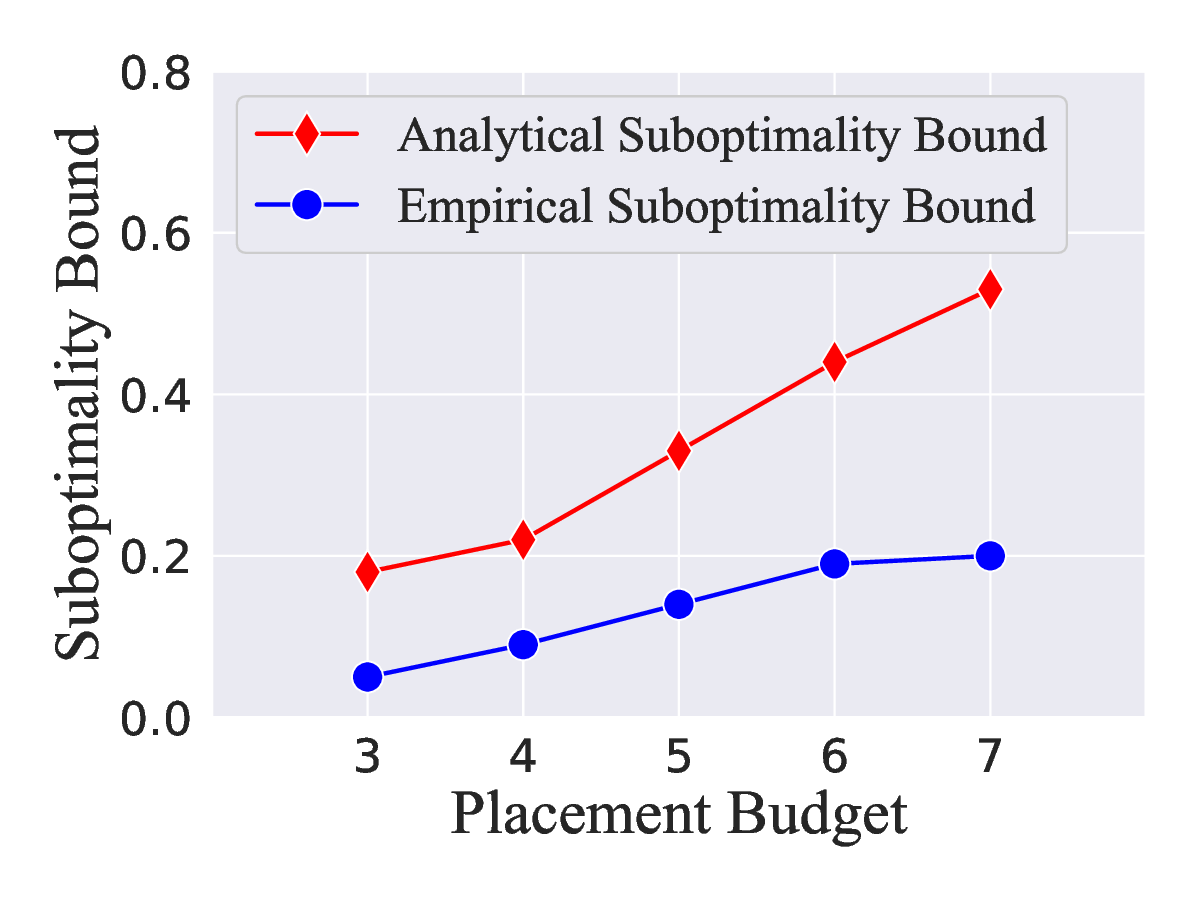}
        \caption{Comparision of empirical and analytical suboptimality bounds based on $\epsilon_{v, B}$  for varying budgets $B$.}
        \label{fig:subopt_bound_compare_sense}
    \end{subfigure}
    \caption{Empirical studies on Algorithm~\ref{alg:orient}, Algorithm~\ref{alg:sense}, and the joint optimization framework.}
    \label{fig:joint_sensor_opt_figures}
\end{figure*}
\subsection{Near-Optimality and Scalability of Algorithm~\ref{alg:sense}}

We empirically evaluate the  near-optimality of Algorithm~\ref{alg:sense}  by comparing its performance against the true optimal sensor placement on instances where exhaustive enumeration is computationally feasible. We consider a ground set of $|\mathcal{V}|=6$ sensors. We vary the placement budget $B \in \{1,2,3,4,5\}$. For each budget, we run Algorithm~\ref{alg:sense} to obtain the set $\cals^g$ and find the optimal placement $\sopt$  via exhaustive enumeration over all $\binom{|\mathcal{Q}|}{B}$ subsets of size $B$, for over $50$ instances. To ensure a fair comparison, all subsets, including the greedy solution, are evaluated using the same approximation quality of $\epsilon=0.001$ for the equilibrium value of their respective zero-sum games using Algorithm~\ref{alg:orient}. In Figure~\ref{fig:greedy_opt_sense}, we plot the utility ratio
\begin{equation}
\rho \;=\; 
\frac{v(\cals^g)}
     {v(\sopt)}.
\end{equation}
A ratio $\rho = 1$ indicates that the greedy solution is optimal, while values very close to $1$ indicate strong near-optimality. As observed from Figure~\ref{fig:greedy_opt_sense}, Algorithm~2 consistently achieves near-optimal performance across instances, with an average ratio of $0.92$. Moreover, in over $50\%$ of the instances, the greedy solution attains $\rho = 1$, exactly matching the optimal.

Next, we evaluate the computational efficiency of Algorithm~\ref{alg:sense} by comparing its runtime against the optimal oracle across varying selection budgets. The corresponding run times are reported in Figure~\ref{fig:runtime_compare_sense}. As observed, due to the number of possible subsets growing combinatorially (i.e., $\binom{|\mathcal{Q}|}{B}$), finding the exact (optimal) solution via exhaustive search (brute-force) becomes prohibitively expensive. In contrast, Algorithm~\ref{alg:sense} maintains tractable and scalable performance, exhibiting only modest growth in runtime as the budget increases.

 \subsection{Evaluation of the Joint Optimization Framework}

We aim to isolate and quantify the individual contributions of sensor placement and equilibrium-based scheduling optimizations in the proposed framework. Specifically, we evaluate the performance of the system when either the placement or the scheduling component is replaced by a non-adaptive (or de-coupled) baseline. We consider the following scenarios.

\begin{enumerate}
    \item \textit{Case 1 -- Greedy Placement with Uniform Scheduling}: 
    Sensor placement is performed using Algorithm~\ref{alg:sense}, but the scheduling policy is fixed to a uniform distribution over the four orientations for each deployed sensor. The utility used by Algorithm~\ref{alg:sense} (Line 3) under a rational intruder is
    \[
    v_{\scenario{uniform}}(\mathcal{S}) 
    = \min_{y \in Y} \bar{x}_{\mathcal{S}}^\top A^{\mathcal{S}} y,
    \]
    where $\bar{x}_{\mathcal{S}}$ denotes the uniform mixed strategy over joint orientations of the sensors in $\cals$.

    \item \textit{Case 2 -- Uniform Random Placement with Equilibrium-based Scheduling}: 
    A subset $\mathcal{S}^{\scenario{uniform}}$ of size $B$ is sampled uniformly at random from $\mathcal{Q}$, and the induced game is solved using Algorithm~\ref{alg:orient} to find the equilibrium value
    \[
    v(\mathcal{S}^{\scenario{uniform}}) 
    = \max_{x \in X} \min_{y \in Y} x^\top A^{\mathcal{S}^{\scenario{uniform}}} y.
    \]

    \item \textit{Case 3 -- Uniform Random Placement with Uniform Scheduling}: 
    Both placement and scheduling are randomized. A subset $\mathcal{S}^{\scenario{uniform}}$ of size $B$ is sampled uniformly at random from $\mathcal{Q}$, and scheduling is fixed to the uniform orientation distribution $\bar{x}_{\mathcal{S}^{\scenario{uniform}}}$. The performance is evaluated as
    \[
    v_{\scenario{uniform}}(\mathcal{S}^{\scenario{uniform}}) = \min_{y \in Y} \bar{x}_{\mathcal{S}^{\scenario{uniform}}}^\top A^{\mathcal{S}^{\scenario{uniform}}} y.
    \]

    \item \textit{\textbf{Case 4 -- Our Framework}}: We run greedy placement (Algorithm~2) using the equilibrium-based scheduling (Algorithm~1) as the oracle to determine the placement $\cals^g$ and the resulting value is $v(\cals^g)$.
\end{enumerate}

We set $|\mathcal{V}| = 8$ and randomly generate the sensor detection probabilities (as mentioned earlier) and fix (freeze) them. We vary the placement budget $B \in \{1,2,3,4,5,6,7\}$. Then, for each of the aforementioned cases,  we compute the overall utility using a fixed approximation tolerance of $\epsilon = 0.001$.

\begin{remark}
This experiment is designed to disentangle and quantify the individual and joint contributions of strategic placement and equilibrium scheduling. Comparing Case 1 with our framework isolates the marginal benefit of equilibrium scheduling when placement is already optimized. Comparing Case 2 with our framework isolates the value of strategic placement under optimal scheduling. Finally, Case 3 provides a fully non-adaptive baseline, allowing us to clearly measure the compounded gains obtained by jointly optimizing both components. Together, these comparisons offer a clean and systematic decomposition of where performance improvements originate.
\end{remark}

In Figure~\ref{fig:baseline_compare_joint}, we compare the achieved detection utilities across the four cases. As observed, the joint optimization framework (Case 4) attains the highest utility. This clearly demonstrates that sensor placement and scheduling are inherently coupled decisions. Optimizing them independently leads to suboptimal performance, whereas jointly optimizing both components is critical and beneficial for maximizing worst-case detection performance against a rational intruder.

Finally, in Figure~\ref{fig:subopt_bound_compare_sense}, we compare the analytical suboptimality bounds for Algorithm~\ref{alg:sense}, derived via the weak-submodularity constant $\epsilon_{v,B}$, with empirically estimated bounds across varying budgets $B$. We observe that the bounds from Lemma~\ref{lma:weaksubmod} are indeed conservative, as the empirically estimated instance dependent bounds are significantly tighter.
\section{Conclusion}
We addressed the problem of joint sensor placement and orientation scheduling for an intrusion detection task by decomposing the overall task into a meta-level subset selection problem for placement combined with a game-theoretic orientation scheduling subproblem. For a fixed placement, the scheduling component is formulated as a zero-sum game between the defender and a rational intruder, whose Nash equilibrium value characterizes the best achievable detection performance. We developed an efficient equilibrium-seeking algorithm with convergence guarantees and used the resulting game value as the utility for the placement meta-problem. We further showed that this utility is weakly submodular over the set of sensors, which enables greedy selection with provable near-optimality guarantees and yields a computationally tractable and theoretically grounded solution to the joint optimization problem. Through extensive simulations, we demonstrated that the proposed framework achieves superior  detection performance while significantly improving computational efficiency relative to baseline approaches.
\appendix

\section{Proof of Theorem~\ref{thm:dist_best_response}}
\begin{proof}
Fix an intruder mixed strategy $y \in Y$, and let $\cals=\{s_1,\dots,s_{|\cals|}\}$ be the selected sensor set.  
Recall that the defender's \emph{joint} pure strategy space is $
I \;=\; \Theta^{\cals}$. Each strategy can be represented as
$i = [i_1,\dots,i_{|\cals|}] \in I, 
m \;=\; |I| =  |\Theta|^{|\cals|}.$
Here each coordinate $i_q\in i$ specifies the orientation chosen for sensor $s_q$. We use the notation $\theta_k \in \Theta$,  $k = 1,\hdots, |\Theta|$, to denote the pure (orientation) strategy for a sensor. 

For each sensor $s_q\in\cals$, define we have its sub-game matrix $A^q \in \mathbb{R}^{|\Theta|\times |Y|}$ given by
\[
A^q_{k j} \;=\; -\, V^q_{k j}\,\log\!\bigl(1-p_{detect,q}\bigr),
\quad \theta_k\in\Theta,\; j\in J.
\]
For a joint pure strategy $i=[i_1,\dots,i_{|\cals|}]$, the entries of the \emph{overall} game matrix
$A^{\cals}\in\mathbb{R}^{m\times |Y|}$ can be equivalently written as the component-wise sum of entries of the sub-game matrices, as. 
\begin{equation}\label{eq:additive-lift}
A^{\cals}_{i j} \;=\; \sum_{q=1}^{|\cals|} A^q_{k j},
\quad \forall \hspace{2pt} i \in I,  \theta_k \in\Theta,\; j\in J.
\end{equation}

This follows from \eqref{eq:log_p_miss} and \eqref{eq:small_A}. For each sensor $s_q$, let
\[
\sigma^{q\star} \in \argmin_{\sigma^q\in\Delta_{|\Theta|}} \sigma^{q\top} A^q y
\]
be a (mixed) best response to $y$ in its sub-game. Define the defender's joint mixed strategy
\[
x^\star \;=\; \bigotimes_{q=1}^{|\cals|} \sigma^{q\star} \in X,
\]
i.e., under $x^\star$ the orientations are sampled independently with marginals $\sigma^{q\star}$.

\smallskip
\noindent\textbf{Marginalization of the joint strategy.}
Let $x\in X$ be any joint mixed strategy over $\Theta^{|\cals|}$. For each sensor index $q$,
let $\mu_q\in\Delta_{|\Theta|}$ denote the marginal distribution of $x$ on coordinate $i_q$, i.e.,
\[
\mu_q(\theta_k) \;=\; \sum_{i\in\Theta^{|\cals|}: \, i_q=\theta_k} x(i), 
\qquad \forall \theta_k\in\Theta.
\]
Using \eqref{eq:additive-lift} and applying linearity of expectation,
\begin{align*}
x^\top A^{\cals} y
&= \sum_{i\in\Theta^{|\cals|}} x(i)\,\bigl(A^{\cals} y\bigr)
 = \sum_{i} x(i)\, \sum_{q=1}^{|\cals|} \bigl(A^q y\bigr) \\
&= \sum_{q=1}^{|\cals|} \sum_{i} x(i)\, \bigl(A^q y\bigr)
 = \sum_{q=1}^{|\cals|} \sum_{\theta_k\in\Theta} \Bigl(\sum_{i:\, i_q=\theta_k} x(i)\Bigr)\, \bigl(A^q y\bigr) \\
&= \sum_{q=1}^{|\cals|} \sum_{\theta_k\in\Theta} \mu_q(\theta_k)\, \bigl(A^q y\bigr)
 = \sum_{q=1}^{|\cals|} \mu_q^\top A^q y.
\end{align*}

\smallskip
\noindent\textbf{Minimization separates across sensors.}
Since for any choice of marginals $(\mu_1,\dots,\mu_{|\cals|})$ there exists a feasible joint distribution
(e.g., the product $\bigotimes_q \mu_q$), we have
\begin{align*}
\min_{x\in\Delta_m} x^\top A^{\cals} y
&= \min_{\mu_1,\dots,\mu_{|\cals|}\in\Delta_{|\Theta|}} \ \sum_{q=1}^{|\cals|} \mu_q^\top A^q y\\
 &= \sum_{q=1}^{|\cals|} \min_{\mu_q\in\Delta_{|\Theta|}} \mu_q^\top A^q y.
\end{align*}
By definition of $\sigma^{q\star}$, each $\sigma^{q\star}$ attains the corresponding minimum, hence
\[
\sum_{q=1}^{|\cals|} \sigma^{q\star\top} A^q y
\;=\;
\sum_{q=1}^{|\cals|} \min_{\mu_q\in\Delta_{|\Theta|}} \mu_q^\top A^q y
\;=\;
\min_{x\in\Delta_m} x^\top A^{\cals} y.
\]

\smallskip
\noindent\textbf{The product strategy is a best-response.}
The product strategy $x^\star=\bigotimes_q \sigma^{q\star}$ has marginals $\mu_q=\sigma^{q\star}$ for all $q$,
so by Step 1,
\[
x^{\star\top} A^{\cals} y
\;=\;
\sum_{q=1}^{|\cals|} \sigma^{q\star\top} A^q y
\;=\;
\min_{x\in X} x^\top A^{\cals} y.
\]
Therefore,
\[
x^\star \in \argmin_{x\in X} x^\top A^{\cals} y.
\]\end{proof}

\section{Proof of Theorem~\ref{thm:nashgap}}
\begin{proof}
From Theorem~\ref{thm:dist_best_response}, we have established that the product strategy $x^\star=\bigotimes_q \sigma^{q\star}$ where $\sigma^{q\star} \in \argmin_{\sigma^q\in\Delta_{|\Theta|}} \sigma^{q\top} A^q y$ is the best-response to the intruder's mixed-strategy $y$ with respect to the overall game matrix $A^\cals$. Following the discussions in Sections~2.4, 2.5 and 2.6 of \cite{freund1996game}, the resulting repeated distributed best-response dynamics presented in Algorithm~\ref{alg:orient} can be embedded within the multiplicative-weight update framework presented in \cite{freund1996game}. Invoking Corollary~2 of \cite{freund1996game}, we obtain convergence of the empirical strategy profile to an approximate Nash equilibrium, with the standard no-regret guarantees and rate bounds established therein.
\end{proof}

\section{Proof of Lemma~\ref{lma:weaksubmod}}

We begin by constructing a projected game representation that allows sensor addition to be interpreted directly as a matrix addition. 
Specifically, we embed all game matrices arising from sensor sets of varying cardinalities into a common space.
Under this construction, the game matrix corresponding to any sensor set $\mathcal S$ can be written as a linear sum of per-sensor matrices associated with the elements of $\mathcal S$. 
Consequently, adding a sensor corresponds to adding its associated matrix contribution in this unified representation.

We show that this projected representation preserves both mixed strategies and the equilibrium value. In particular, there exist projection and lifting operators between the original game (which we call the ``native" game)  and the projected game such that the projected mixed-strategies combined with the projected game matrix lead to the same game value, effectively.

Recall that $\mathcal V$ is the full candidate sensor set and 
$\Theta$ is the finite set of orientations available to each sensor. Each element of $\mathcal V$ is interpreted as a sensor–location pair, and we suppress any explicit $\mathcal V \times \mathcal L$ structure, since the induced game matrix depends only on the realized placement configuration. So we have that $|\mathcal{V}|$ is the size of the largest possible sensor placement set. For notational uniformity across different placement sets,
we define the largest joint orientation space
\[
\Theta^{\mathcal V}
\;:=\;
\prod_{s\in\mathcal V} \Theta,
\]
and let $\alpha = (\alpha_s)_{s\in\mathcal V} \in \Theta^{\mathcal V}$ 
denote a full orientation assignment.

\noindent \textbf{Projection of the  game matrix.}
Define the matrix 
\[
\mathbf A^{\mathcal V} 
\;\in\;
\mathbb R^{|\Theta|^{|\mathcal V|} \times |J|}
\]
with rows indexed by $\alpha \in \Theta^{\mathcal V}$ 
and columns by $j \in  J$, and entries
\[
\mathbf A^{\mathcal V}(\alpha,j)
\;:=\;
\sum_{s\in\mathcal V} a_s(\alpha_s,j),
\]
where $a_s(\theta,j)$ denotes the payoff contribution of sensor $s$
when oriented at $\theta \in \Theta$ against intruder path $j \in J$, which is given by

\[
a_s(\theta,j) = -V^s_{\theta j} \log(1 - p_{detect,s}).
\]

Equivalently, we can decompose the matrix as
\[
\mathbf A^{\mathcal V}
\;=\;
\sum_{s\in\mathcal V} \mathbf A_s,
\qquad
\mathbf A_s(\alpha,j)
:=
a_s(\alpha_s,j).
\]
Each $\mathbf A_s$ depends only on coordinate $\alpha_s$,
and hence has repeated entries along all other coordinates.

\noindent \textbf{Projected game matrix of a sensor set $\cals$. }
For any subset $\mathcal S \subseteq \mathcal V$,
define the projected matrix
\[
\mathbf A^{\mathcal S}
\;\in\;
\mathbb R^{|\Theta|^{|\mathcal V|} \times | J|}
\]
by
\[
\mathbf A^{\mathcal S}(\alpha,j)
\;:=\;
\sum_{s\in\mathcal S} a_s(\alpha_s,j)
\;=\;
\sum_{s\in\mathcal S} \mathbf A_s(\alpha,j).
\]
Thus $\mathbf A^{\mathcal S}$ lives in the same space
as $\mathbf A^{\mathcal V}$, but ignores all coordinates
$s \notin \mathcal S$.
In particular, for $s \notin \mathcal S$,
$\mathbf A^{\mathcal S}(\alpha,j)$ is constant in $\alpha_s$,
which induces repeated rows along those coordinates.

Let the native (i.e., non-projected) matrix for subset $\mathcal S$
be denoted by
\[
\widetilde{\mathbf A}^{\mathcal S}
\;\in\;
\mathbb R^{|\Theta|^{|\mathcal S|} \times | J|},
\qquad
\widetilde{\mathbf A}^{\mathcal S}(\beta,j)
:=
\sum_{s\in\mathcal S} a_s(\beta_s,j),
\]
with $\beta \in \Theta^{\mathcal S}$.

\noindent \textbf{Projection (marginalization) of defender strategies.}
Let $\Delta(X)$ denote the probability simplex over a finite set $X$.
For $x \in \Delta(\Theta^{\mathcal V})$,
define the marginal distribution on $\mathcal S$ as
\[
\pi_{\mathcal S}(x) \in \Delta(\Theta^{\mathcal S}),
\qquad
\big(\pi_{\mathcal S}(x)\big)(\beta)
:=
\sum_{\alpha \in \Theta^{\mathcal V} :
\alpha|_{\mathcal S} = \beta} x(\alpha).
\]

\medskip
\noindent
\textbf{Identity 1 - Mixed-strategy equivalence.}
For every $x \in \Delta(\Theta^{\mathcal V})$
and every $j \in  J$,
\begin{align*}
\sum_{\alpha \in \Theta^{\mathcal V}}
x(\alpha)\mathbf A^{\mathcal S}(\alpha,j)
&=
\sum_{\beta \in \Theta^{\mathcal S}}
\big(\pi_{\mathcal S}(x)\big)(\beta)
\widetilde{\mathbf A}^{\mathcal S}(\beta,j).
\end{align*}
This can be easily established as follows. Since $\mathbf A^{\mathcal S}(\alpha,j)$
depends on $\alpha$ only through $\alpha|_{\mathcal S}$,
grouping terms by $\beta = \alpha|_{\mathcal S}$ yields
\begin{align*}
\sum_{\alpha} x(\alpha)\mathbf A^{\mathcal S}(\alpha,j)
&=
\sum_{\beta \in \Theta^{\mathcal S}}
\left(
\sum_{\alpha:\alpha|_{\mathcal S}=\beta}
x(\alpha)
\right)
\sum_{s\in\mathcal S} a_s(\beta_s,j).
\end{align*}

Conversely, for any $\bar x \in \Delta(\Theta^{\mathcal S})$,
define a lifted distribution
\[
\mathrm{Lift}_{\mathcal S}(\bar x)(\alpha)
:=
\bar x(\alpha|_{\mathcal S})
\prod_{s \notin \mathcal S} q_s(\alpha_s),
\]
where each $q_s \in \Delta(\Theta)$ is arbitrary
(e.g., uniform).
Then
\[
\pi_{\mathcal S}(\mathrm{Lift}_{\mathcal S}(\bar x))
=
\bar x,
\]
and the induced payoff against any $j$
coincides with that of $\bar x$ in the native game.

\noindent \textbf{Identity 2 - Value equivalence.}
Define
\[
v_{\mathcal S}(x,y)
:=
x^\top \mathbf A^{\mathcal S} y,
\qquad
x \in \Delta(\Theta^{\mathcal V}),\;
y \in \Delta( J),
\]
and
\[
\tilde v_{\mathcal S}(\bar x,y)
:=
\bar x^\top \widetilde{\mathbf A}^{\mathcal S} y,
\qquad
\bar x \in \Delta(\Theta^{\mathcal S}).
\]
Then
\[
v_{\mathcal S}(x,y)
=
\tilde v_{\mathcal S}(\pi_{\mathcal S}(x),y),
\]
and for every $\bar x$ there exists
$x = \mathrm{Lift}_{\mathcal S}(\bar x)$
such that
\[
\tilde v_{\mathcal S}(\bar x,y)
=
v_{\mathcal S}(x,y).
\]
Therefore,
\[
\max_{x \in \Delta(\Theta^{\mathcal V})}
\min_{y \in \Delta(\mathcal J)}
v_{\mathcal S}(x,y)
=
\max_{\bar x \in \Delta(\Theta^{\mathcal S})}
\min_{y \in \Delta(\mathcal J)}
\tilde v_{\mathcal S}(\bar x,y),
\]
and equilibrium strategies in the  projected game
can be projected onto $\mathcal S$
without changing the game value.\\

\noindent \textbf{Example to Illustrate the Projection --} We now present an example to illustrate the projected representation of the game matrix, their mixed-strategies and the equivalence of the game value. Let $\mathcal V=\{1,2\}$ and $\Theta=\{a,b\}$, so the projected defender action space is
\[
\Theta^{\mathcal V}=\{(a,a),(a,b),(b,a),(b,b)\}.
\]
Let $\mathcal S=\{1\}$ be a single-sensor set. The \emph{native} game for $\mathcal S$ has defender actions
\[
\Theta^{\mathcal S}=\{a,b\}.
\]
Fix any intruder action set $ J$ (e.g., paths). Let sensor~1's payoff contributions be
$a_1(a,j)$ and $a_1(b,j)$ for each $j\in J$.
In the projected representation, the (projected) game matrix $\mathbf A^{\mathcal S}$ is indexed by
$\alpha=(\alpha_1,\alpha_2)\in\Theta^{\mathcal V}$ and satisfies
\[
\mathbf A^{\mathcal S}(\alpha,j)=a_1(\alpha_1,j),
\qquad \forall \alpha\in\Theta^{\mathcal V},\ \forall j\in J,
\]
so the second coordinate $\alpha_2$ is irrelevant and induces repeated rows. To make this explicit, write the rows in the order
$(a,a),(a,b),(b,a),(b,b)$. Then for each fixed $j\in J$,
\[
\mathbf A^{\mathcal S}(:,j)
=
\begin{bmatrix}
a_1(a,j)\\
a_1(a,j)\\
a_1(b,j)\\
a_1(b,j)
\end{bmatrix},
\]
where the first two entries coincide (sensor~1 plays $a$) and the last two entries coincide (sensor~1 plays $b$).

\noindent \textit{Projection of mixed strategies.}
A mixed strategy in the projected game is any distribution
$x\in\Delta(\Theta^{\mathcal V})$ over the four joint actions:
\begin{align*}
x_{aa}=x(a,a),&\quad x_{ab}=x(a,b),\\
x_{ba}=x(b,a),&\quad x_{bb}=x(b,b).
\end{align*}
We have $x_{aa}+x_{ab}+x_{ba}+x_{bb}=1.$
The projection (marginal) of $x$ onto $\mathcal S=\{1\}$ is the distribution
$\pi_{\mathcal S}(x)\in\Delta(\Theta^{\mathcal S})$ given by
\[
\pi_{\mathcal S}(x)(a)=x_{aa}+x_{ab},
\qquad
\pi_{\mathcal S}(x)(b)=x_{ba}+x_{bb}.
\]
Because $\mathbf A^{\mathcal S}(\alpha,j)$ depends only on $\alpha_1$, the expected payoff against any
intruder mixed strategy $y\in\Delta( J)$ depends only on this marginal:
\begin{align*}
&x^\top \mathbf A^{\mathcal S} y
=
\sum_{j\in\mathcal J} y(j)\sum_{\alpha\in\Theta^{\mathcal V}} x(\alpha)\mathbf A^{\mathcal S}(\alpha,j)\\
&=
\sum_{j\in\mathcal J} y(j)\Big[(x_{aa}+x_{ab})a_1(a,j) + (x_{ba}+x_{bb})a_1(b,j)\Big]\\
&=
\sum_{j\in\mathcal J} y(j)\Big[\pi_{\mathcal S}(x)(a)\,a_1(a,j)+\pi_{\mathcal S}(x)(b)\,a_1(b,j)\Big]\\
&=
\big(\pi_{\mathcal S}(x)\big)^\top \widetilde{\mathbf A}^{\mathcal S} y,
\end{align*}
where $\widetilde{\mathbf A}^{\mathcal S}$ is the native $2\times| J|$ matrix with rows indexed by
$\{a,b\}$ and entries $\widetilde{\mathbf A}^{\mathcal S}(a,j)=a_1(a,j)$ and
$\widetilde{\mathbf A}^{\mathcal S}(b,j)=a_1(b,j)$.

\noindent \textit{Lifting (one concrete choice).}
Conversely, given any native mixed strategy $\bar x\in\Delta(\Theta^{\mathcal S})$ with
$\bar x(a)=p$ and $\bar x(b)=1-p$, one valid lift to the projected space is to distribute mass
uniformly over the irrelevant coordinate:
\[
x_{aa}=x_{ab}=\frac{p}{2},
\qquad
x_{ba}=x_{bb}=\frac{1-p}{2}.
\]
This lifted $x$ satisfies $\pi_{\mathcal S}(x)=\bar x$ and induces the same expected payoff against all $y$,
illustrating that the projection preserves mixed strategies (up to marginalization) and hence preserves the game value.\\

\noindent \textbf{Identity 3 - Non-decreasing max-min value. }Let $\mathbf M,\mathbf M'\in\mathbb{R}^{m\times n}$ satisfy $\mathbf M'\succeq \mathbf M$ entry wise. Write $\mathbf M'=\mathbf M+\Delta\mathbf M$, where $\Delta\mathbf M\succeq 0$ entry wise.
Let $x^\star\in\arg\max_{x\in\Delta([m])}\min_{y\in\Delta([n])} x^\top \mathbf M y$ be an optimal strategy for $\mathbf M$.
Define
\[
\underline v := \min_{y\in\Delta([n])} (x^\star)^\top \mathbf M y
\;=\;
v(\mathbf M).
\]
Now consider the same mixed strategy $x^\star$ evaluated on $\mathbf M'$:
\[
\min_{y\in\Delta([n])} (x^\star)^\top \mathbf M' y
=
\min_{y\in\Delta([n])}\Big((x^\star)^\top \mathbf M y + (x^\star)^\top \Delta\mathbf M\, y\Big).
\]
Since $\Delta\mathbf M\succeq 0$ entry wise and $x^\star,y$ are probability vectors (hence entry wise nonnegative),
we have $(x^\star)^\top \Delta\mathbf M\, y \ge 0$ for every $y\in\Delta([n])$. Therefore,
\[
(x^\star)^\top \mathbf M y + (x^\star)^\top \Delta\mathbf M\, y
\;\ge\;
(x^\star)^\top \mathbf M y,
\qquad \forall y\in\Delta([n]).
\]
Taking the minimum over $y$ preserves this point wise inequality, yielding
\[
\min_{y\in\Delta([n])} (x^\star)^\top \mathbf M' y
\;\ge\;
\min_{y\in\Delta([n])} (x^\star)^\top \mathbf M y
\;=\;
v(\mathbf M).
\]
Finally, maximizing over $x$ on the left can only increase the value:
\begin{align*}
v(\mathbf M')
&=
\max_{x\in\Delta([m])}\min_{y\in\Delta([n])} x^\top \mathbf M' y \\
\;&\ge\;
\min_{y\in\Delta([n])} (x^\star)^\top \mathbf M' y
\;\ge\;
v(\mathbf M).
\end{align*}

We are now in place to establish the monotonicity and weak-submodularity properties of $v$ (Lemma~\ref{lma:weaksubmod}).

\begin{proof} ~\\

\noindent \textbf{Proof of Monotonicity. } For each sensor $s\in\mathcal V$, we have its \emph{per-sensor projected (projected) matrix}
\[
\mathbf A_s \in \mathbb{R}^{|\Theta|^{|\mathcal V|}\times | J|},
\qquad
\mathbf A_s(\alpha,j) := a_s(\alpha_s,j),
\]
where $\alpha\in\Theta^{\mathcal V}$ and $j\in J$. Let the value of the game induced by $\mathcal S$ be
\begin{equation}
\label{eq:value_def}
v(\mathcal S)
\;:=\;
\max_{x\in\Delta(\Theta^{\mathcal V})}\;
\min_{y\in\Delta( J)}
x^\top \mathbf A^{\mathcal S} y.
\end{equation}
Equivalently, this equals the native game value over $\Theta^{\mathcal S}$ (from \textbf{Identity 2}).

We have that each sensor's contribution is nonnegative:
\begin{equation}
\label{eq:nonneg_assump}
a_s(\theta,j)\ge 0,
\quad \forall s\in\mathcal V,\ \forall \theta\in\Theta,\ \forall j\in J.
\end{equation}
Under \eqref{eq:nonneg_assump}, each $\mathbf A_s$ is entry wise nonnegative, i.e.,
$\mathbf A_s(\alpha,j)\ge 0$ for all $(\alpha,j)$.

Let $\mathcal S\subseteq\mathcal T$. Using the per-sensor decomposition,
\[
\mathbf A^{\mathcal T}
=
\sum_{s\in\mathcal T}\mathbf A_s
=
\sum_{s\in\mathcal S}\mathbf A_s
+
\sum_{s\in\mathcal T\setminus\mathcal S}\mathbf A_s
=
\mathbf A^{\mathcal S} + \sum_{s\in\mathcal T\setminus\mathcal S}\mathbf A_s.
\]
By \eqref{eq:nonneg_assump}, each $\mathbf A_s\succeq 0$ entry wise, hence
\[
\mathbf A^{\mathcal T}\succeq \mathbf A^{\mathcal S}\quad \text{entry wise}.
\]
Applying \textbf{Identity 3} with $\mathbf M=\mathbf A^{\mathcal S}$ and
$\mathbf M'=\mathbf A^{\mathcal T}$, and recalling the value definition \eqref{eq:value_def},
we obtain $v(\mathcal T)\ge v(\mathcal S)$, which proves monotonicity.

\noindent \textbf{Proof of Weak-Submodularity.}  With a slight abuse of notation, we use $v(\mathcal S)$ and 
$v(\mathbf A^{\mathcal S})$ interchangeably to denote the optimal value 
of the zero-sum game induced by the sensor set $\mathcal S$. 
That is,
\[
v(\mathcal S)
\;=\;
v(\mathbf A^{\mathcal S})
\;:=\;
\max_{x\in\Delta(\Theta^{\mathcal V})}
\min_{y\in\Delta( J)}
x^\top \mathbf A^{\mathcal S} y.
\]

Let $\|\cdot\|_\infty$ denote the entry wise max norm. The maximum per-sensor contribution magnitude $\xi(s)$, which is defined in \eqref{eq:lambda_bound}, is exactly the largest entry of the sensor matrix. That is
\[
\xi(s)\;:=\;\|\mathbf A_s\|_\infty
\;=\;
\max_{\alpha\in\Theta^{\mathcal V},\, j\in J}\; \mathbf A_s(\alpha,j),
\qquad s\in\mathcal V.
\]

\noindent \textbf{Identity 4 - Sensitivity of game-value to perturbation~\cite{lipton2006stability}. }For two zero-sum games $\mathbf M, \mathbf M'\in\mathbb R^{m\times n}$,
\begin{align*}
&\Big|v(\mathbf M)-v(\mathbf M')\Big|
\;\le\;
\|\mathbf M-\mathbf M'\|_\infty.
\end{align*}

 Now consider the difference in marginal gains.
\begin{align*}
&v_i(\mathcal T)-v_i(\mathcal S)
=
\big(v(\mathcal T\cup\{i\})-v(\mathcal T)\big)
-
\big(v(\mathcal S\cup\{i\})-v(\mathcal S)\big)\\
&=
\Big(v(\mathbf A^{\mathcal T}+\mathbf A_i)-v(\mathbf A^{\mathcal T})\Big)
-
\Big(v(\mathbf A^{\mathcal S}+\mathbf A_i)-v(\mathbf A^{\mathcal S})\Big)\\
&=
\Big(v(\mathbf A^{\mathcal T}+\mathbf A_i)-v(\mathbf A^{\mathcal S}+\mathbf A_i)\Big)
+
\Big(v(\mathbf A^{\mathcal S})-v(\mathbf A^{\mathcal T})\Big).
\end{align*}

Applying \textbf{Identity 4}, we have
\begin{align*}
v_i(\mathcal T)-v_i(\mathcal S)
\;& \le\;
\|\mathbf A^{\mathcal T}-\mathbf A^{\mathcal S}\|_\infty
+
\|\mathbf A^{\mathcal T}-\mathbf A^{\mathcal S}\|_\infty \\
&=
2\|\mathbf A^{\mathcal T}-\mathbf A^{\mathcal S}\|_\infty.
\end{align*}
Finally, since $\mathbf A^{\mathcal T}-\mathbf A^{\mathcal S}
=
\sum_{s\in\mathcal T\setminus\mathcal S}\mathbf A_s$, we have 
\[
\|\mathbf A^{\mathcal T}-\mathbf A^{\mathcal S}\|_\infty
\le
\sum_{s\in\mathcal T\setminus\mathcal S}\|\mathbf A_s\|_\infty
=
\sum_{s\in\mathcal T\setminus\mathcal S}\xi(s).
\]

For a fixed $\mathcal T$, the right-hand side is maximized over $\mathcal S\subseteq\mathcal T$
by taking $\mathcal S=\emptyset$, yielding $2\sum_{s\in\mathcal T}\xi(s)$.
Next, over all $\mathcal T$ with $|\mathcal T|\le B$, the quantity $\sum_{s\in\mathcal T}\xi(s)$
is maximized by choosing $\mathcal T=\mathcal S^B$, i.e., the $B$ sensors with the largest $\xi(s)$.
Therefore,
\[
\epsilon_{f,B}
=
\max_{\substack{\mathcal S\subseteq\mathcal T\subseteq\mathcal V\\ |\mathcal T|\le B}}
\ \max_{i\in\mathcal V\setminus\mathcal T}
\ \big(v_i(\mathcal T)-v_i(\mathcal S)\big)
\;\le\;
2\sum_{s\in\mathcal S^B}\xi(s).
\]
\end{proof}

\section{Proof of Theorem~\ref{thm:greedy}}
\begin{proof}Lemma~\ref{lma:weaksubmod} shows that \sense~can be seen as a monotone weak-submodular function maximization problem under a partition matroid constraint. Consequently, the near-optimality guarantees for Algorithm~\ref{alg:sense} follow directly from the standard results for weak-submodular maximization presented in~\cite{fisher2009analysis,calinescu2011maximizing,hashemi2019submodular}. Specifically, greedy algorithm has the following near-optimality bound for maximizing a set function $f$ with additive weak-submodularity constant $\epsilon_f$ under $p-$matroid constraints and a cardinality constraint of $k$.

\[
f(\cals^g) \ge \frac{p}{p+1} (f(\sopt) - (k-1) \epsilon_f),
\]
where $\cals^g$ and $\sopt$ are the greedy and optimal sets, respectively.
By substituting $p = 1$, $k = B$ and $\epsilon_f = \epsilon_{v,B}$, we obtain the stated result. \end{proof}

\section{Proof of Corollary~\ref{coro:same_sensors}}

\begin{proof}
    If all sensors are identical (or homogenous), i.e., they have the same $p_{detect}$, orientation action space $\Theta$,  and have the same coverage/range (i.e., same $V_{ij}$'s),  \sense~problem reduces to the problem of selecting the best subset of $B$ locations (or graph nodes) to place sensors that maximizes the utility. Then, this becomes equivalent to a weak-submodular maximization problem under (only) cardinality constraints. The result follows directly from Proposition~2 of \cite{hashemi2019submodular}.
\end{proof}
\bibliographystyle{ieeetr}
\bibliography{ref}

\end{document}